\documentclass[conference]{IEEEtran}
\IEEEoverridecommandlockouts
\usepackage[left=0.64in, right=0.64in, top=0.73in, bottom=0.5in, includefoot]{geometry}
\usepackage{cite}
\usepackage{array}
\usepackage{algorithm}
\usepackage{algorithmic}

\usepackage{graphicx}
\usepackage{stfloats}
\usepackage{booktabs}
\usepackage{textcomp}

\usepackage{amsthm}

\theoremstyle{plain}
\newtheorem{theorem}{Theorem}
\newtheorem{Definition}{Definition}
\newtheorem{lemma}{Lemma}
\newtheorem{example}{Example}
\newtheorem{remark}{Remark}
\usepackage{amsmath,amssymb,amsfonts}
\usepackage{xcolor}
\def\BibTeX{{\rm B\kern-.05em{\sc i\kern-.025em b}\kern-.08em
    T\kern-.1667em\lower.7ex\hbox{E}\kern-.125emX}}
\usepackage[utf8]{inputenc}   
\usepackage[T1]{fontenc}      
\usepackage[english]{babel} 
\makeatletter
\renewenvironment{proof}[1][\proofname]{%
  \par\pushQED{\qed}\normalfont
  \topsep6\p@\@plus6\p@ \trivlist
  \itemindent\parindent 
  \item[\hskip\labelsep\itshape #1\@addpunct{.}]%
}{%
  \popQED\endtrivlist\@endpefalse
}
\makeatother

\begin{document}

\title{Secure Decentralized Pliable Index Coding for Target Data Size\\
\author{
    \IEEEauthorblockN{Anjali Padmanabhan$^{1}$, Danya Arun Bindhu$^{1}$, Nujoom Sageer Karat$^{1}$, and Shanuja Sasi$^{2}$}
    \IEEEauthorblockA{
        $^{1}$Department of Electronics and Communication Engineering, National Institute of Technology Calicut, India  \\
        $^{2}$Department of Electrical Engineering, Indian Institute of Technology Kanpur, India \\
        E-mail: $\{$anjali\_b220700ec, danya\_b220817ec, nujoom$\}$@nitc.ac.in, shaunjas@iitk.ac.in
    }
}

}

\maketitle

\begin{abstract}
Decentralized Pliable Index Coding (DPIC) problem addresses efficient information exchange in distributed systems where clients communicate among themselves without a central server. An important consideration in DPIC is the heterogeneity of side-information and demand sizes. Although many prior works assume homogeneous settings with identical side-information cardinality and single message demands, these assumptions limit real-world applicability where clients typically possess unequal amounts of prior information. In this paper, we study DPIC problem under heterogeneous side-information cardinalities. We propose a transmission scheme that coordinates client broadcasts to maximize coding efficiency while ensuring that each client achieves a common target level $T$. In addition, we impose a strict security constraint that no client acquires more than the target $T$ number of messages, guaranteeing that each client ends up with exactly $T$ messages. We analyze the communication cost incurred by the proposed scheme under this security constraint.
\end{abstract}

\section{Introduction}
{\it Index coding  (IC)} \cite{bar-yossef2011_indexcoding} problem studies efficient broadcast communication in which a single server transmits coded messages over a noiseless channel to multiple receivers, each having a subset of messages as side-information and demanding an unknown message. By exploiting this side-information, the goal is to minimize the total number of broadcast transmissions\cite{birk1998iscod}. While IC problem traditionally assumes that each receiver demands a predetermined message, many practical systems allow
greater flexibility.  {\it Pliable index coding  (PIC)} \cite{brahma2015_plicable_indexcoding} problem  generalizes this framework by allowing each receiver to decode any message it does not already possess, rather than a predetermined one. Most existing IC and PIC works also assume a centralized transmitter with global knowledge of all messages and side-information; however, such an assumption is often impractical in modern distributed systems \cite{elrouayheb2010cooperative}. This motivates the study of {\it decentralized pliable index coding (DPIC)} \cite{liu2021_decentralized_plicable} problem, where clients themselves generate and broadcast coded transmissions based solely on locally available side-information, typically resulting in an increased number of transmissions due to the lack of global coordination. DPIC problem finds applications in peer-to-peer networks, edge caching, and distributed content sharing platforms, where there is no central transmitter. Another critical aspect is the cardinality of side-information and demand size. Many existing DPIC works assume homogeneous settings where all clients have the same amount of side-information and require exactly one new message. While this simplifies code design, it limits applicability in real-world, heterogeneous scenarios. In \cite{NCCpaper}, DPIC with clients holding heterogeneous side-information cardinalities is studied, and an optimal transmission scheme is proposed that enables all clients to reach a common target knowledge
level.

Beyond efficiency, privacy and security are critical concerns in index coding systems. Prior works have investigated security against external eavesdroppers with or without side-information, introducing notions such as weak, block, and strong security, and characterizing conditions under which linear index codes can prevent leakage of unintended messages\cite{dau2012_security_index_coding,PerfSecureIC,OngSIC}. A complementary line of work studies privacy among receivers, requiring that each user learns nothing beyond its own side-information and intended message, often achieved through shared secret keys between the server and subsets of users\cite{narayanan2018_private_index_coding}. Security considerations have also been explored in the context of PIC problem, where restrictions are imposed to ensure that each client decodes only a single new message, effectively enforcing individual security at the decoding level \cite{sasi2019code}. Security issues become even more pronounced in decentralized settings, where transmissions originate from users themselves rather than a central server. Recent studies on secure DPIC impose individual-security constraints, ensuring that each user decodes only a single new message \cite{liu2020_secure_decentralized_picod}. They demonstrate that decentralization under security constraints can incur a significantly larger transmission overhead compared to centralized settings. Despite these advances, most existing secure DPIC works assume homogeneous side-information sizes and single-message demands. These assumptions simplify the analysis but limit applicability in realistic heterogeneous networks.

\subsection{Our Contributions}
This paper incorporates security considerations into DPIC for scenarios involving clients with non-uniform side-information sizes. We adopt the {\it linearly progressive sets with fixed overlap (LPS-FO)} model from \cite{NCCpaper}, which reflects practical heterogeneity in decentralized networks where clients hold different amounts of side-information, as commonly encountered in intelligent transportation systems. In such systems, vehicles or roadside units exchange traffic-related information and are often pliable in the sense that they are satisfied by receiving any new, useful message. Due to variations in sensing capabilities and route exposure, different vehicles naturally accumulate unequal yet overlapping sets of information, which can be systematically exploited to design efficient decentralized transmission strategies. Under this model, we design a transmission algorithm that schedules client broadcasts in a way that maximizes coding gain per transmission while ensuring all clients reach a common target knowledge level, subject to a  security constraint. The security requirement stipulates that no client should acquire more than a pre-specified target number 
$T$ of messages, that is,  by the end of the transmission phase, each client must possess exactly 
T messages. We also prove that when the number of clients is $3$ or $4$, our proposed scheme is optimal.

\noindent {\it Notations:}
The notation $[a,b]$ represents the set $\{a,a+1,a+2,\ldots, b\}$. The bitwise exclusive OR (XOR) operation is denoted by $\oplus$. For any set $A$, the number of elements in the set (the cardinality of the set) is denoted by $|A|$. For the same set, the $i$-th element is represented by $A[i]$.

\section{System Model}
\label{sec sys model}
We consider a DPIC problem with $M$ messages denoted by $\mathcal{X}=\{x_1,x_2,\ldots,x_M\}$, and $C$ clients denoted by $\{\mathcal{C}_1,\mathcal{C}_2, \dots ,\mathcal{C}_C\}$. Each client $\mathcal{C}_i$ possesses a subset of messages known as side-information set $\mathcal{I}_i \subseteq \mathcal{X}$. 
In this work, we consider  a specific class of heterogeneous side-information structures, which we term as {\it Linearly Progressive Sets with Fixed Overlap (LPS-FO)}.

\begin{Definition}
\label{def LPSFO}
    {\bf Linearly Progressive Sets with Fixed Overlap (LPS-FO)} \cite{NCCpaper}:
The clients' side-information sets are said to be LPS-FO if they satisfy the following two properties:

\begin{itemize}
    \item \textit{Linear Progression:} The side-information sets of the clients are  such that each one has exactly one more message than the previous one and the messages in the side-information sets are consecutive.
    \begin{itemize}
        \item The first client, $\mathcal{C}_1$, starts with $K$ messages: \\$\mathcal{I}_1 = \{x_1, x_2, \dots, x_K\}$.
        \item The size of the side-information sets grows linearly: $|\mathcal{I}_i| = K + (i - 1)$, for any $i \in [1,C]$.
        \item Consequently, the final client, $\mathcal{C}_C$, has the largest set: $|\mathcal{I}_C| = K + (C - 1)$.
    \end{itemize}
    
    \item \textit{Fixed Consecutive Overlap:} Any two consecutive clients share exactly $P$ messages, for any $P \leq K$. Specifically, the last $P$ messages of client $\mathcal{C}_i$ are the first $P$ messages of client $\mathcal{C}_{i+1}$, for $i \in [1,C-1]$. 
\qed
\end{itemize} 

\end{Definition}
Throughout this paper, we assume that the side-information sets are LPS-FO. The clients are indexed sequentially till client $\mathcal{C}_C$  whose side-information set has the largest size. The objective is to raise all clients to a common level of knowledge with security constraints, which is formally defined in Definition \ref{def secure dpic}. 
This target is defined as {one message more than the current holdings of the most knowledgeable client (\(\mathcal{C}_C\))}: $T = |\mathcal{I}_C| + 1 = K + C.$
With security constraint, after the  delivery process, {every client must end up with exactly \( T \) distinct messages}.  
The client with the smallest initial knowledge base, \(\mathcal{C}_1\), begins with only \( K \) messages.  
To reach \( T \), this client must therefore receive {exactly \( C \) new messages}, since
$T - K = (K + C) - K = C .$

\begin{Definition}
\label{def secure dpic}
    {\bf Secure DPIC with LPS-FO}:
    A transmission scheme for DPIC problem with LPS-FO side-information sets is said to be secure if all clients are raised to a common target knowledge level
        $T = |\mathcal{I}_C| + 1 = K + C,$
    and the following condition is satisfied.
    \begin{itemize}
        \item After the delivery process, every client must possess {exactly $T$ distinct messages}. No client acquires more than the target knowledge level 
        $T$. \qed
    \end{itemize} 
\end{Definition}

\begin{theorem}[Lower Bound \cite{NCCpaper}]
\label{thm optimality}
For a DPIC problem with LPS-FO side-information sets, the optimal number of transmissions required to achieve the target knowledge level $T = K + C$ for all clients is  given by $S_{\text{opt}}=C$, if $C \geq 3$,
    and  $S_{\text{opt}}=3$, if $C = 2$.
\end{theorem}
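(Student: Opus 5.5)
The statement has two parts: a converse, showing at least $C$ transmissions are needed (at least $3$ when $C=2$), and an achievability, giving schemes meeting these counts. We first fix notation from Definition~\ref{def LPSFO}. Since $|\mathcal{I}_i|=K+i-1$ and consecutive sets overlap in $P$ messages, $\mathcal{I}_i$ is a block of consecutive indices starting at $s_i$ with $s_{i+1}=s_i+(K+i-1)-P$. A transmission is a (linear) function of the sender's own side-information. Client $\mathcal{C}_i$ must gain $T-|\mathcal{I}_i|=C-i+1$ new messages; in particular $\mathcal{C}_1$ needs $C$ and $\mathcal{C}_C$ needs $1$.

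\emph{Converse.} Over a linear model, $\mathcal{C}_1$ must recover $C$ new independent messages from the transmissions together with $\mathcal{I}_1$. Each transmission adds at most one dimension to the span available to $\mathcal{C}_1$, so at least $C$ transmissions are needed for every $C\ge 2$. For $C=2$ we must rule out exactly two transmissions, and there are only two clients.
\begin{itemize}
\item If $\mathcal{C}_1$ transmits, the transmission is a function of $\mathcal{I}_1$ and is useless to $\mathcal{C}_1$ itself. Then $\mathcal{C}_1$ would have to obtain $2$ new messages from the single transmission of $\mathcal{C}_2$, which is impossible.
\item Otherwise both transmissions come from $\mathcal{C}_2$. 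Then no message outside $\mathcal{I}_2$ is ever sent, so $\mathcal{C}_2$ gains nothing and cannot reach $T$.
\end{itemize}
Hence $S_{\text{opt}}\ge 3$ for $C=2$.

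\emph{Achievability for $C=2$.} It suffices for $\mathcal{C}_1$ to send one message from $\mathcal{I}_1\setminus\mathcal{I}_2$, and for $\mathcal{C}_2$ to send two messages from $\mathcal{I}_2\setminus\mathcal{I}_1$. We check that these difference sets are large enough, using $P\le K$:
\[
|\mathcal{I}_1\setminus\mathcal{I}_2|=K-P,\qquad |\mathcal{I}_2\setminus\mathcal{I}_1|=K+1-P\ge 1 .
\]
The first count is positive when $P<K$. The edge case $P=K$ (which forces $\mathcal{I}_1\subset\mathcal{I}_2$) would need separate handling, but there $\mathcal{C}_2$ can never receive anything new from $\mathcal{C}_1$. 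I would therefore argue that the LPS-FO setting implicitly requires $P<K$, and flag this case.

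\emph{Achievability for $C\ge 3$.} We need a schedule of exactly $C$ transmissions, namely the construction of \cite{NCCpaper}. The natural design uses the chain structure: each client $\mathcal{C}_j$ broadcasts XORs of one message from its "left" overlap and one from its "right" overlap, or private messages. Neighbours decode using their overlap, and the information propagates so that $\mathcal{C}_1$ collects $C$ independent equations while $\mathcal{C}_C$ gets one. I would try the following choice: every client sends one transmission, and $\mathcal{C}_j$ sends a coded combination of messages from its exclusive part and from its overlaps.

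We then verify by induction over the chain that client $i$ can peel off exactly $C-i+1$ new messages. The main obstacle is this verification. Non-adjacent clients share no messages, since $s_{i+1}-s_i\ge K-P$ grows, so $\mathcal{C}_1$ can only directly decode messages from transmissions whose unknown part lies in a region it can resolve step by step. I would therefore arrange the transmissions in a "telescoping" order: the transmission of $\mathcal{C}_{j}$ involves one new message and one message already decodable from the earlier transmissions of $\mathcal{C}_{j-1},\dots$. Successive cancellation then works for the lower-indexed clients. I would check the small cases $C=3,4$ explicitly before generalising.
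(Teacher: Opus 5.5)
The paper itself does not prove this theorem; it imports it from \cite{NCCpaper}. Your converse is fine. The dimension count at $\mathcal{C}_1$ gives at least $C$ transmissions for every $C\ge 2$, and your two-case argument correctly excludes two transmissions when $C=2$.

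\textbf{Gap: achievability for $C\ge 3$.} You never exhibit a scheme, and the one concrete allocation you propose, ``every client sends one transmission'', cannot work. $\mathcal{C}_1$ must learn $C$ new messages and gains nothing from its own transmission, so it would have only $C-1$ useful ones. Your own $C=2$ reasoning generalizes: if $\mathcal{C}_i$ sends $n_i$ of the $C$ transmissions, then $C-n_i\ge C-i+1$, i.e.\ $n_i\le i-1$. So $\mathcal{C}_1$ must be silent, $\mathcal{C}_2$ sends at most one, and some client must send at least two. For $C=3$ the split $n_2=1$, $n_3=2$ is forced.

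A schedule that works is the following. For $i\in[2,C-1]$, $\mathcal{C}_i$ sends $\mathcal{I}_{F_1}^{i}\oplus\mathcal{I}_{L_1}^{i}$. $\mathcal{C}_C$ sends $\mathcal{I}_{F_1}^{C}\oplus\mathcal{I}_{U_1}^{C}$ and $\mathcal{I}_{F_1}^{C}\oplus\mathcal{I}_{U_2}^{C}$. That is $C$ transmissions in total. Using $\mathcal{I}_{L_1}^{i}=\mathcal{I}_{F_1}^{i+1}$, the decoding works out as follows:
\begin{itemize}
\item $\mathcal{C}_1$ peels off $\mathcal{I}_{L_1}^{2},\dots,\mathcal{I}_{L_1}^{C-1}$ and then the two unique messages of $\mathcal{C}_C$, exactly $C$ new messages.
\item $\mathcal{C}_i$ with $2\le i\le C-1$ gets $C-1-i$ forward link messages, the two unique messages, and $i-2$ backward messages $\mathcal{I}_{F_1}^{i-1},\dots,\mathcal{I}_{F_1}^{2}$. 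That totals $C-1\ge C-i+1$.
\item $\mathcal{C}_C$ gets $C-2\ge 1$ backward messages.
\end{itemize}
This is the forward/backward chain mechanism of the paper's Algorithms~2 and~3. For $C=3$ and $C=4$ they reduce to exactly this pattern, with a triple XOR at $\mathcal{C}_2$ when $C=4$, added for security.

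\textbf{Needed hypothesis.} These decoding counts rely on $K\ge 2P$ (with $P\ge 1$, so that $\mathcal{I}_{F_1}^{i}$ and $\mathcal{I}_{L_1}^{i}$ exist), which is the condition under which \cite{NCCpaper} states its scheme; $P<K$ alone is not enough. Your justification that non-adjacent clients share nothing because $s_{i+1}-s_i$ grows is not valid. Since each $\mathcal{I}_i$ is a block of consecutive indices, $\mathcal{I}_i\cap\mathcal{I}_{i+2}=\emptyset$ holds iff $|\mathcal{I}_{i+1}|\ge 2P$, i.e.\ $K+i\ge 2P$. If $|\mathcal{I}_{i+1}|<2P$, then $\mathcal{I}_{L_1}^{i+1}$ lies among the first $P$ messages of $\mathcal{I}_{i+1}$. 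It is then already known to $\mathcal{C}_i$, and the chain yields nothing new.

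\textbf{Minor slip.} In your $C=2$ achievability, sending two distinct messages from $\mathcal{I}_2\setminus\mathcal{I}_1$ requires $K+1-P\ge 2$, not $\ge 1$. This is again equivalent to $P<K$, which $K\ge 2P$ with $P\ge 1$ guarantees.
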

In \cite{NCCpaper}, an optimal transmission scheme was proposed for a case where $K\geq 2P$ which enables each of the clients to reach the target level without security constraint.
 In this paper, we consider secure DPIC with LPS-FO and provide a transmission scheme specifically designed to satisfy security constraints in this model.

\begin{example}
An example for DPIC with LPS-FO side-information structure with an initial side-information set size $K=7$ and a fixed overlap of $P=3$ is shown in Table \ref{tab:sec_example1}. In this example, $M = 75$, the number of clients is $C=9$, and  the set of clients is $\{\mathcal{C}_i: i \in [9]\}$. 
\begin{table}[!t]
\centering
\caption{Clients and side-information sets for example 1 and 2}
\label{tab:sec_example1}
\resizebox{\linewidth}{!}{
\begin{tabular}{|c|c|}
\hline
        Client & Side-information \\
\hline

$\mathcal{C}_1$ & $x_1, x_2, x_3, x_4, x_5, x_6, x_7$ \\

$\mathcal{C}_2$ & $x_5, x_6, x_7, x_8, x_9, x_{10}, x_{11}, x_{12}$ \\

$\mathcal{C}_3$ & $x_{10}, x_{11}, x_{12}, x_{13}, x_{14}, x_{15}, x_{16}, x_{17}, x_{18}$ \\

$\mathcal{C}_4$ & $x_{16}, x_{17}, x_{18}, x_{19}, x_{20}, x_{21}, x_{22}, x_{23}, x_{24}, x_{25}$ \\

$\mathcal{C}_5$ & $x_{23}, x_{24}, x_{25}, x_{26}, x_{27}, x_{28}, x_{29}, x_{30}, x_{31}, x_{32}, x_{33}$ \\

$\mathcal{C}_6$ & $x_{31}, x_{32}, x_{33}, x_{34}, x_{35}, x_{36}, x_{37}, x_{38}, x_{39}, x_{40}, x_{41}, x_{42}$ \\

$\mathcal{C}_7$ & $x_{40}, x_{41}, x_{42}, x_{43}, x_{44}, x_{45}, x_{46}, x_{47}, x_{48}, x_{49}, x_{50}, x_{51}, x_{52}$ \\

$\mathcal{C}_8$ & $x_{50}, x_{51}, x_{52}, x_{53}, x_{54}, x_{55}, x_{56}, x_{57}, x_{58}, x_{59}, x_{60}, x_{61}, x_{62}, x_{63}$ \\

$\mathcal{C}_9$ & $x_{61}, x_{62}, x_{63}, x_{64}, x_{65}, x_{66}, x_{67}, x_{68}, x_{69}, x_{70}, x_{71}, x_{72}, x_{73}, x_{74}, x_{75}$ \\

\hline
\end{tabular}}
\end{table}
This configuration is a valid LPS-FO because it satisfies both defining properties.\\
\textit{Linear Progression:} The size of the side-information sets increases by one with each subsequent client, following $|\mathcal{I}_i| = K + (i-1)$. We have $|\mathcal{I}_1|=7$, $|\mathcal{I}_2|=8$, $|\mathcal{I}_3|=9$, \ldots ,$|\mathcal{I}_9|=15$. The messages in each set are also consecutive. \\
\textit{Fixed Consecutive Overlap:} Any two consecutive clients share exactly $P=3$ messages. Specifically, $\mathcal{I}_1 \cap \mathcal{I}_2 = \{x_5, x_6, x_7\}$, $\mathcal{I}_2 \cap \mathcal{I}_3 = \{x_{10}, x_{11}, x_{12}\}$, \ldots, $\mathcal{I}_8 \cap \mathcal{I}_9 = \{x_{61}, x_{62}, x_{63}\}$.\\
For this example, the target is set to $T = |\mathcal{I}_C| + 1 =  16$. By the end of the delivery phase, each client must possess exactly 16 distinct messages.
\end{example}

 \section{Transmission Scheme for Secure DPIC with LPS-FO Side-information Sets}
 \label{sec trans sch}
 In this section, we present a transmission scheme for a particular class of  secure DPIC with  LPS-FO side-information sets, as specified in Theorem \ref{thm main}. The achievability of this scheme is established via a recursive algorithm that generates the required number of transmissions. This procedure is formally detailed in Algorithm 1.
\begin{theorem}
\label{thm main}
Consider a secure DPIC problem with LPS-FO side-information sets such that $K \geq 2P$, $P \geq r_{\max} - 2$, and $C \geq 3$, where $r_{\max}$ is a positive integer satisfying
\begin{align}
\label{rmax general}
\frac{(r_{\max}-2)(r_{\max}-1)}{2} < C \leq \frac{r_{\max}(r_{\max}-1)}{2}.
\end{align}
For such instances, the recursive transmission scheme described in Algorithm~1 achieves the exact target knowledge level $T = K + C$ for every client. The total number of transmissions required for $C$ clients is given by $N(C)$, where 
\begin{align}
N(C) = C + N(C - r_{\max}),
\end{align}
with  $N(0)=0$, $N(1)=1$, and $N(2)=3$. \qed
\end{theorem}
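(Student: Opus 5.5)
The proof is by strong induction on $C$, following the recursive structure of Algorithm~1. Each round processes a block of $r_{\max}$ clients with $C$ transmissions and leaves a residual instance on $C-r_{\max}$ clients that behaves like an LPS-FO instance. For the base cases I will give explicit schemes. For $C=1$, a single uncoded transmission of a message outside $\mathcal{I}_1$ (sent by a client holding it, or by $\mathcal{C}_1$ itself in the residual setting) raises the client by exactly one. For $C=2$, three transmissions suffice and are necessary by Theorem~\ref{thm optimality}. I will take a scheme in which each transmission is either uncoded or an XOR of one message from $\mathcal{I}_1\setminus\mathcal{I}_2$ with one from $\mathcal{I}_2\setminus\mathcal{I}_1$. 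I will check directly that $\mathcal{C}_1$ gains exactly two messages and $\mathcal{C}_2$ exactly one. Together with $N(0)=0$, these anchor the recursion.

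For the inductive step I will first read off the role of \eqref{rmax general}. Among $r_{\max}$ designated clients there are $r_{\max}(r_{\max}-1)/2 \ge C$ unordered pairs. I plan to associate each of the $C$ transmissions of the current round with a distinct pair $(\mathcal{C}_a,\mathcal{C}_b)$, or with a single client. The transmission is then an XOR of messages sent by a client who holds all of them, arranged so that $\mathcal{C}_a$ and $\mathcal{C}_b$ each know all but one term. The hypothesis $K\ge 2P$ guarantees each client has at least $K-2P\ge 0$ private messages, which serve as the terms that are new to others. The overlap messages, numbering $P\ge r_{\max}-2$, let one designated client be paired with up to $r_{\max}-1$ partners, which is where $P\ge r_{\max}-2$ enters. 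The lower bound in \eqref{rmax general} ensures the $C$ transmissions do not fit into $r_{\max}-1$ clients, so each round must close out $r_{\max}$ clients.

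With this assignment I will count new messages. Each designated client $\mathcal{C}_i$ should gain exactly $T-|\mathcal{I}_i| = C-i+1$ messages, and it has either reached $T$ or is accounted for. Each non-designated client should gain the same number of messages in the round. The aim is that the remaining $C-r_{\max}$ clients, relabelled, again have consecutive, linearly progressive side-information, so that the target gap $T-|\mathcal{I}_i|$ reproduces an LPS-FO instance with $C-r_{\max}$ clients and the same $K,P$ parameters (up to relabelling of messages). The induction hypothesis then gives $N(C-r_{\max})$ further transmissions. This yields $N(C)=C+N(C-r_{\max})$, provided I verify that the smaller instance still satisfies $K\ge 2P$ and $P\ge r'_{\max}-2$ with $r'_{\max}\le r_{\max}$.

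The main obstacle will be security. It is not enough that each intended client decodes its intended symbol. I must show that no client, using any linear combination of all received transmissions across all rounds together with its side-information, can decode more than its quota. My approach is to ensure that every XOR meant to be undecodable for a client contains at least two messages unknown to that client. I will also arrange the supports of the transmissions so that the unknown parts for any fixed client form a matrix over $\mathbb{F}_2$ whose row span contains no unit vector beyond the intended decoded ones. I will argue this via a rank count: the number of transmissions involving the client's unknowns equals the number of unknowns minus (intended decodes + 1), per connected component. Controlling this across rounds, where later rounds can interact with earlier leftovers, is the delicate part. I will first try to make the supports of different rounds disjoint on each client's unknown set, so that the rank argument decouples round by round.
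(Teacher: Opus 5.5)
You have a genuine gap in the inductive step. You never construct the $C$ transmissions of a round; the statement concerns the specific ones in Algorithms~2 and~3, which you do not examine. Moreover, the mechanism you read off the triangular condition cannot deliver the required counts. The first $r_{\max}$ clients need $\sum_{i=1}^{r_{\max}}(C-i+1)=r_{\max}C-\binom{r_{\max}}{2}$ new messages from only $C$ transmissions. If each transmission is built to serve its assigned pair, at most $2C$ are delivered, which already fails at $C=4$ ($10>8$) and badly at $C=9$ ($35>18$). The binomial coefficient counts \emph{misses}, not decoding pairs. In the paper, $\mathcal{C}_i^l$ gains one new message from every level-$l$ transmission except exactly $i-1$ of them, and $C^l\le\binom{r^l_{\max}}{2}$ says these misses suffice to cover each transmission's own sender. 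This is realized by decoding chains. $\mathcal{C}_i^l$, for $2\le i\le r^l_{\max}-2$, sends $i-1$ XORs anchored at $\mathcal{I}_{F_1}^{l,i}$, and the identities $\mathcal{I}_{F_1}^{l,i+1}=\mathcal{I}_{L_1}^{l,i}$ let every other designated client peel them off forward or backward. Here $K\ge 2P$ matters because $\mathcal{C}_i$ then has $|\mathcal{I}_i|-2P\ge i-1$ unique messages; the bound $K-2P\ge 0$ you cite guarantees none. In Algorithm~2 the surplus $\binom{r^l_{\max}}{2}-C^l$ of misses is absorbed by exactly that many transmissions $\mathcal{I}_{L_1}^{l,r^l_{\max}-1}\oplus\mathcal{I}_{L_{q+1}}^{l,r^l_{\max}-1}$, which $\mathcal{C}^l_{r^l_{\max}}$ already knows; this is where $P\ge r_{\max}-2$ enters. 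That construction needs $C^l-r^l_{\max}-T^l\ge 0$ with $T^l=\binom{r^l_{\max}-2}{2}$. The lower end of the condition guarantees this except at $C^l=\binom{r^l_{\max}-1}{2}+1$, where the quantity equals $-1$. There the paper switches to Algorithm~3, whose three-term XOR from $\mathcal{C}_2^l$ is decodable only by $\mathcal{C}_1^l$ and so absorbs the whole surplus $r^l_{\max}-2$ at once. Your plan has no counterpart of this case. Finally, the recursion needs every non-designated client to gain \emph{nothing}, not ``the same number''. A positive common gain would already satisfy $\mathcal{C}_C$ and break $N(C)=C+N(C-r_{\max})$. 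The residual instance then has first-client size $K+r_{\max}$, not $K$, with the target preserved as $K^{l+1}+C^{l+1}=T$.

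Your base cases also fail where they are actually used. Since $C\ge 3$, $C^l\in\{1,2\}$ arises only at a level $l\ge 2$. By then the earlier clients already hold exactly $T$ messages and overhear every later transmission, so an uncoded message is decoded by each of them that lacks it. For $C^l=2$, every message of $\mathcal{I}_{C-1}\cup\mathcal{I}_C$ is still unknown to $\mathcal{C}_1$, so uncoded transmissions of the two clients' own messages leak. Your XOR of a message of $\mathcal{I}_1\setminus\mathcal{I}_2$ with one of $\mathcal{I}_2\setminus\mathcal{I}_1$ cannot be formed by either client from its side-information. For $C^l=1$ you give no reason why some message outside $\mathcal{I}_C$ should already be known to all of $\mathcal{C}_1,\dots,\mathcal{C}_{C-1}$; for $C=11$, where this case arises at level~3, none is. More structurally, $N(1)=1$ makes no sense for a standalone one-client instance, since nobody else can transmit. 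So the induction cannot run over fresh LPS-FO instances. The hypothesis must be strengthened to say that a subinstance's transmissions involve only messages that no outside client holds or has decoded, with a helper from the previous level allowed in the base case. The paper gets this containment because level-$l$ transmissions involve only messages from the terminal overlap of $\mathcal{C}_1^l$ through the unique messages of $\mathcal{C}^l_{r^l_{\max}}$, which no client of another level holds or decodes. It anchors both base cases on the overlap of $\mathcal{C}_{C-1}$ and $\mathcal{C}_C$, and for $C^l=1$ it checks separately that the last unique message of $\mathcal{C}_{C-1}$ was unused at the previous level. Your component-wise rank criterion is a sound way to certify security once the transmissions are fixed, sounder in fact than the paper's remark that any two transmissions share at most one message. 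But since you fix neither the transmissions nor the cross-level disjointness, the security part is still only a plan.
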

\begin{table}
\centering
\begin{tabular}{cl}
\toprule
\textbf{Symbol} & \textbf{Description} \\
\midrule
$C$ & Total number of clients in the original problem \\
$C^l$ & Number of clients at recursive level $l$ of \textbf{Algorithm 1} \\
$\mathcal{C}_i^l$ & The $i$-th client at recursive level $l$ \\
$r_{\max}^l$ & Parameter determined by $C^l$ at level $l$ \\
$\mathcal{I}_{F_j}^{l,i}$ & The $j$-th message in the \textit{initial overlapping segment} of \\&side-information set of client $\mathcal{C}_i^l$ \\
$\mathcal{I}_{L_j}^{l,i}$ & The $j$-th message in the \textit{terminal overlapping segment} of \\&side-information set of  client $\mathcal{C}_i^l$ \\
$\mathcal{I}_{U_j}^{l,i}$ & The $j$-th \textit{unique (non-overlapping)} message in the \\&side-informaiton set of  client $\mathcal{C}_i^l$ \\
$\mathcal{I}_{U_L}^{l,i}$ & The last \textit{unique (non-overlapping)} message in the \\&side-informaiton set of  client $\mathcal{C}_i^l$ \\
$W_{j}^{l,i}$ & The $j$-th transmission done by client $\mathcal{C}_i^l$ at recursive level $l$ \\
$\mathcal{W}^l$ & Set of transmissions at level $l$ \\ 
\bottomrule \\
\end{tabular}
\caption{Summary of mathematical notation used in the Algorithms 1-3.}
\label{tab:notations}
\end{table}
\begin{algorithm}
\caption{Main Recursive Algorithm related to Theorem \ref{thm main}}
\label{alg:main-recursive}
\begin{algorithmic}[1]
\REQUIRE Number of clients $C$, client set $\mathcal{C} =\{\mathcal{C}_i, i \in [C]\}$ and their side-information sets with LPS-FO structure.
\ENSURE Complete transmission scheme $\mathcal{W} $

\STATE Initialize: $l = 1$, $C^1 = C, K^{1} = K, \mathcal{W} = \emptyset$
\STATE Initial  set of clients: $\mathcal{C}^1 =\{\mathcal{C}_i^1: \mathcal{C}_i^1=\mathcal{C}_i, i \in [C]\}$

\WHILE{${C}^l > 2$}
    \STATE \textbf{Step 1:} Find the positive integer $r_{\max}^l$ satisfying
    \begin{align}
    \label{rmax}
        \frac{(r_{\max}^l-2)(r_{\max}^l-1)}{2} < C^l \leq \frac{r_{\max}^l(r_{\max}^l-1)}{2}
    \end{align}
    
    \IF{$C^l \neq \frac{(r_{\max}^l-2)(r_{\max}^l-1)}{2} + 1$}
        \STATE $\mathcal{W}^l \gets$ Execute \textbf{Algorithm 2}($C^l, K^l,r_{\max}^l, \mathcal{C}^l$)
    \ELSE
        \STATE $\mathcal{W}^l \gets$ Execute \textbf{Algorithm 3}($C^l, K^l, r_{\max}^l, \mathcal{C}^l$)
    \ENDIF
    
    \STATE Add all transmissions from $\mathcal{W}^l$ to  $\mathcal{W}$
    
    \STATE \textbf{Step 2:} Prepare clients for next recursive level
    \STATE $C^{l+1} = C^l - r_{\max}^l$
    \STATE $K^{l+1} = K^l + r_{\max}^l$
    \IF{$C^{l+1} > 0$}
        \FOR{$i = 1$ \TO $C^{l+1}$}
        \STATE $\quad \mathcal{C}_i^{l+1} \gets \mathcal{C}_{r_{\max}^l + i}^l$ 
        \ENDFOR
    \ENDIF
    
    \STATE $l \gets l + 1$ \COMMENT{Move to next recursive level}
\ENDWHILE

\STATE \textbf{Base Cases:}
\IF{$C^l = 2$}
    \STATE $\mathcal{C}_2^l$ transmits $W_{1}^{l,2} = \mathcal{I}_{F_1}^{l,2} \oplus \mathcal{I}_{U_1}^{l,2}$
    \STATE $\mathcal{C}_2^l$ transmits $W_2^{l,2} = \mathcal{I}_{F_1}^{l,2} \oplus \mathcal{I}_{U_2}^{l,2}$
    \STATE $\mathcal{C}_1^l$ transmits $W_1^{l,1} = \mathcal{I}_{L_1}^{l,1} \oplus \mathcal{I}_{U_1}^{l,1}$
    \STATE $\mathcal{W}^l = \{W_1^{l,2}, W_2^{l,2}, W_1^{l,1}\}$
    \STATE Add $\mathcal{W}^l$ to $\mathcal{W}$
\ELSIF{$C^l = 1$}
    \STATE $\mathcal{C}_{C-1}^1$ transmits $W_1^{l, C-1} = \mathcal{I}_{F_1}^{1,{C-1}} \oplus \mathcal{I}_{U_L}^{1,{C-1}}$
    \STATE $\mathcal{W}^l = \{W_1^{l, C-1}\}$
    \STATE Add $\mathcal{W}^l$ to $\mathcal{W}$
\ENDIF

\RETURN $\mathcal{W}$
\end{algorithmic}
\end{algorithm}
\begin{algorithm}
\caption{General Case Transmission Scheme}
\label{alg:general-case}
\begin{algorithmic}[1]
\REQUIRE $C^l$ clients, $\mathcal{C}^l =\{\mathcal{C}_i^l, i \in [C^l]\}$,  cardinality of side-information set of the first client in  $\mathcal{C}^l$ as $K^l$,  parameter $r_{\max}^l$ such that $C^l \neq \frac{(r_{\max}^l-2)(r_{\max}^l-1)}{2} + 1$
\ENSURE Transmission scheme $\mathcal{W}^l$ for level $l$

\STATE  $\mathcal{W}^l = \emptyset$

\STATE \textbf{Phase 1: Intermediate Clients ($\mathcal{C}_2^l$ to $\mathcal{C}_{r_{\max}^l-2}^l$)}
\FOR{$i = 1$ \TO $r_{\max}^l-3$}
    \STATE $\mathcal{C}_{i+1}^l$ transmits $W_{1}^{l,i+1} = \mathcal{I}_{F_1}^{l,i+1} \oplus \mathcal{I}_{L_1}^{l,i+1}$
    \FOR{$j = 1$ \TO $i-1$}
        \STATE $\mathcal{C}_{i+1}^l$ transmits $W_{j+1}^{l,i+1} = \mathcal{I}_{F_1}^{l,i+1} \oplus \mathcal{I}_{U_j}^{l,i+1}$
    \ENDFOR
\ENDFOR

\STATE Let $T^l = \frac{(r_{\max}^l-2)(r_{\max}^l-3)}{2}$

\STATE \textbf{Phase 2: Client $\mathcal{C}_{r_{\max}^l-1}^l$}
\STATE $\mathcal{C}_{r_{\max}^l-1}^l$ transmits $W_{1}^{l,r_{\max}^l-1} = \mathcal{I}_{F_1}^{l,r_{\max}^l-1} \oplus \mathcal{I}_{L_1}^{l,r_{\max}^l-1}$

\FOR{$j = 1$ \TO $(C^l - r_{\max}^l - T^l)$}
    \STATE $\mathcal{C}_{r_{\max}^l-1}^l$ transmits $W_{j+1}^{l,r_{\max}^l-1} = \mathcal{I}_{F_1}^{l,r_{\max}^l-1} \oplus \mathcal{I}_{U_j}^{l,r_{\max}^l-1}$
    \ENDFOR

\FOR{$j = (C^l - r_{\max}^l -T^l + 1)$ \TO $r_{\max}^l-3$}
    \STATE $q = j - (C^l - r_{\max}^l -T^l )$

    \STATE $\mathcal{C}_{r_{\max}^l-1}^l$ transmits $W_{j+1}^{l,r_{\max}^l-1} = \mathcal{I}_{L_1}^{l,r_{\max}^l-1} \oplus \mathcal{I}_{L_{q+1}}^{l,r_{\max}^l-1}$
  \ENDFOR

\STATE \textbf{Phase 3: Client $\mathcal{C}_{r_{\max}^l}^l$}
\FOR{$j = 1$ \TO $(C^l - r_{\max}^l - T^l + 2)$}
    \STATE $\mathcal{C}_{r_{\max}^l}^l$ transmits $W_{j}^{l,r_{\max}^l} = \mathcal{I}_{F_1}^{l,r_{\max}^l} \oplus \mathcal{I}_{U_j}^{l,r_{\max}^l}$
    
\ENDFOR
\STATE $\mathcal{W}^l \gets \text{all transmissions done at this level $l$}$
\RETURN $\mathcal{W}^l $
\end{algorithmic}
\end{algorithm}
\begin{algorithm}
\caption{ Special Case Transmission Scheme}
\label{alg:special-case}
\begin{algorithmic}[1]
\REQUIRE $C^l$ clients, $\mathcal{C}^l =\{\mathcal{C}_i^l, i \in [C^l]\}$,  cardinality of side-information set of the first client in  $\mathcal{C}^l$ as $K^l$,  parameter $r_{\max}^l$ such that $C^l = \frac{(r_{\max}^l-2)(r_{\max}^l-1)}{2} + 1$
\ENSURE Transmission scheme $\mathcal{W}^l$ for level $l$

\STATE Initialize: $\mathcal{W}^l = \emptyset$

\STATE \textbf{Phase 1: Intermediate Clients ($\mathcal{C}_2^l$ to $\mathcal{C}_{r_{\max}^l-2}^l$)}
\STATE $\mathcal{C}_2^l$ transmits $W_1^{l,2} = \mathcal{I}_{F_1}^{l,2} \oplus \mathcal{I}_{F_2}^{l,2} \oplus \mathcal{I}_{L_1}^{l,2}$

\FOR{$i = 1$ \TO $r_{\max}^l-4$}
    \STATE $\mathcal{C}_{i+2}^l$ transmits $W_{1}^{l,i+2} = \mathcal{I}_{F_1}^{l,i+2} \oplus \mathcal{I}_{L_1}^{l,i+2}$
    
    \FOR{$j = 1$ \TO $i-1$}
        \STATE $\mathcal{C}_{i+2}^l$ transmits $W_{j+1}^{l,i+2} = \mathcal{I}_{F_1}^{l,i+2} \oplus \mathcal{I}_{U_j}^{l,i+2}$
           \ENDFOR
\ENDFOR

\STATE $T^l = \frac{(r_{\max}^l-3)(r_{\max}^l-4)}{2} + 1$

\STATE \textbf{{\bf Phase 2}: Client $\mathcal{C}_{r_{\max}^l-1}^l$}
\STATE $\mathcal{C}_{r_{\max}^l-1}^l$ transmits $W_{1}^{l,{r_{\max}^l-1}} = \mathcal{I}_{F_1}^{l,r_{\max}^l-1} \oplus \mathcal{I}_{L_1}^{l,r_{\max}^l-1}$

\FOR{$j = 1$ \TO $r_{\max}^l-4$}
    \STATE $\mathcal{C}_{r_{\max}^l-1}^l$ transmits $W_{j+1}^{l,{r_{\max}^l-1}} = \mathcal{I}_{L_1}^{l,r_{\max}^l-1} \oplus \mathcal{I}_{U_{j}}^{l,r_{\max}^l-1}$
   \ENDFOR

\STATE \textbf{Phase 3: Client $\mathcal{C}_{r_{\max}^l}^l$}
\FOR{$j = 1$ \TO $(C^l - r_{\max}^l - T^l + 3)$}
    \STATE $\mathcal{C}_{r_{\max}^l}^l$ transmits $W_{j}^{l,r_{\max}^l} = \mathcal{I}_{F_1}^{l,r_{\max}^l} \oplus \mathcal{I}_{U_j}^{l,r_{\max}^l}$
    
\ENDFOR

\STATE $\mathcal{W}^l \gets \text{all transmissions done at this level $l$}$
\RETURN $\mathcal{W}^l $
\end{algorithmic}
\end{algorithm}
\begin{proof}
    
We prove achievability by constructing an explicit transmission scheme and showing that it satisfies all clients using \(N(C)\) transmissions, ensuring that by the end of the process each client holds exactly \(T\) messages. Algorithm~1 solves the secure DICP for the LPS-FO side-information setting, guaranteeing that every client reaches exactly \(T = K + C\) messages through a recursive approach that systematically decomposes the original problem into smaller subproblems. All necessary notation is provided in Table~\ref{tab:notations}. The algorithm begins by initializing the recursive level \(l = 1\) and setting \(C^1 = C\), the total number of clients, and appropriately renaming the clients for consistency at each level. At each recursive level \(l\), the algorithm determines a key parameter \(r_{\max}^l\), defined as the positive integer satisfying the combinatorial inequality \eqref{rmax} based on the current number of clients \(C^l\). This parameter governs the structure of transmissions at that level. The triangular number condition in \eqref{rmax} ensures that the \(C^l\) clients are partitioned into a transmitting group of size \(r_{\max}^l\) and a residual group of size \(C^l - r_{\max}^l\), which moves to level \(l+1\).

Depending on whether \(C^l\) corresponds to a special  case or not, Algorithm 1 branches into one of two transmission subroutines: the general case (Algorithm 2) or the special case (Algorithm 3). Both subroutines generate a set of \(C^l\) coded transmissions \(\mathcal{W}^l\) from a subset of the clients in \(\mathcal{C}^l\) at that level. The transmissions are carefully designed as XOR combinations of messages from the clients' side-information sets, leveraging the initial overlapping segment, terminal overlapping segment, and unique non-overlapping messages. After transmissions for level \(l\) are generated and added to the overall scheme \(\mathcal{W}\), the algorithm prepares for the next recursive level. It is important to note that the LPS-FO structure is preserved in the side-information sets at the beginning of each recursive level for the set of clients considered at that level, which ensures the consistency of the recursive decomposition.

Upon completion of level \(l\), clients \(\mathcal{C}_1^l\) through \(\mathcal{C}_{r_{\max}^l}^l\) have attained the target knowledge level \(T\), while the remaining clients \(\mathcal{C}_{r_{\max}^l+1}^l\) through \(\mathcal{C}_{C^l}^l\) retain their original knowledge levels, having decoded no messages from the transmissions at that level. These remaining clients are re-indexed to form \(\mathcal{C}^{l+1} = \{\mathcal{C}_i^{l+1} = \mathcal{C}_{r_{\max}^l + i}^l \mid i \in [1, C^l - r_{\max}^l]\}\) for the next recursive level \(l+1\). In essence, at each level \(l\), only the first \(r_{\max}^l\) clients participate in decoding; the transmissions are designed solely to satisfy their demands, and their requirements are fully met before the algorithm proceeds.
The process repeats recursively until the number of remaining clients \(C^l\) $\leq 2$, at which point base case transmission schemes are applied. For two clients, three specific XOR transmissions ensure both can decode their remianing needed number of messages, while for a single client, one transmission from a designated original client suffices. The algorithm finally returns the complete set \(\mathcal{W}\) accumulated from all recursive levels, ensuring that all clients can decode their desired number of messages using their side-information and the received coded packets.

\end{proof}
\begin{remark}
    In the absence of security constraints, a lower bound on the number of transmissions is given by the number of clients $C$ as in Theorem \ref{thm optimality}. Under the  security constraint, the number of transmissions required for our scheme proposed in Theorem 2 exceeds the lower bound by $ N(C - r_{\max})$. This deviation indicates the cost of enforcing security.
\end{remark}

\begin{remark}
    It is observed from Algorithm 1 that if ${C}  \in \{3,4\}$, then $r_{\max} = C$. Consequently, the recursion terminates after a single level and the total number of transmissions required is $C$ which matches the lower bound in Theorem \ref{thm optimality}. Hence the proposed scheme is  optimal for these cases.
\end{remark}

\noindent Now, we illustrate  Theorem 2 with the help of an example.
\begin{example}
In this example, we illustrate the transmission scheme for a case where $M = 75$, $K = 7$, and $P = 3$ (which is considered in Example 1). Here, the number of clients is $C=9$, and  $\mathcal{C} =\{\mathcal{C}_i: i \in [9]\}$ is the set of clients present. The goal is for each client to possess exactly \(T = K + C = 16\) messages by the end of transmissions.  The side-information sets of the clients are given in Table~\ref{tab:sec_example1} and the messages each client decodes after every transmission are  given in Table~\ref{tab:sec_example2}. The transmission scheme proceeds as follows.

\begin{table*}[!t]
\centering
\caption{Decoding procedure for example 1}
\label{tab:sec_example2}
\resizebox{\linewidth}{!}{
\begin{tabular}{|c|c|c|c|c|c|c|c|c|c|c|c|c|c|}
\hline
        Client & $x_5 \oplus x_{10}$ & $x_{10} \oplus x_{16}$ & $x_{10} \oplus x_{13}$ & $x_{16} \oplus x_{23}$ & $x_{16} \oplus x_{19}$ & $x_{23} \oplus x_{24}$  & $x_{23} \oplus x_{26}$ & $x_{23} \oplus x_{27}$ & $x_{23} \oplus x_{28}$ & $x_{40} \oplus x_{41} \oplus x_{50}$ & $x_{50} \oplus x_{61}$ & $x_{61} \oplus x_{64}$ & $x_{61} \oplus x_{65}$ \\
\hline

$\mathcal{C}_1$ & $x_{10}$ & $x_{16}$ & $x_{13}$ & $x_{23}$ & $x_{19}$ & $x_{24}$ & $x_{26}$ & $x_{27}$ & $x_{28}$ & - & - & -& -\\

$\mathcal{C}_2$ & -- & $x_{16}$ & $x_{13}$ & $x_{23}$ & $x_{19}$  & $x_{24}$ & $x_{26}$ & $x_{27}$ & $x_{28}$ & - & - & -& -\\

$\mathcal{C}_3$ & $x_{5}$ & -- & -- & $x_{23}$ & $x_{19}$ & $x_{24}$ & $x_{26}$ & $x_{27}$ & $x_{28}$ & - & - & -& -\\

$\mathcal{C}_4$ & $x_{5}$ & $x_{10}$ & $x_{13}$ & -- & --  & -- & $x_{26}$ & $x_{27}$ & $x_{28}$ & - & - & -& -\\

$\mathcal{C}_5$ & $x_{5}$ & $x_{10}$ & $x_{13}$ & $x_{16}$ & $x_{19}$  & -- & -- & -- & -- & -- & - & -& -\\

$\mathcal{C}_6$ & --& -- & -- & -- & --  & -- & -- & -- & -- & $x_{50}$ & $x_{61}$ & $x_{64}$ & $x_{65}$\\

$\mathcal{C}_7$ & -- & -- & -- & -- & -- & -- & -- & -- & -- & -- & $x_{61}$ & $x_{64}$ & $x_{65}$\\

$\mathcal{C}_8$ &-- & -- & -- & -- & --  & -- & -- & -- & -- & -- & -- & $x_{64}$ & $x_{65}$\\

$\mathcal{C}_9$ & -- & -- & -- & -- & -- & -- & -- & -- & -- & -- & $x_{50}$ & -- & --\\

\hline
\end{tabular}}
\end{table*}

 First $r_{\max}^l$ for level $l=1$ is calculated. Here, $l = 1$, $C^1 = 9, K^{1} = 7$ and $P = 3$. We set $\mathcal{C}^1 =\{\mathcal{C}_i, i \in [9]\}$. From {\bf{Step 1}} of {\bf{Algorithm 1}}, $r_{\max}^1 = 5$. Hence, the first five clients will be considered in this level. Here $C^1 = 9$ and $\frac{(r_{\max}^1-2)(r_{\max}^1-1)}{2} + 1 = 7$, so $C^1 \neq \frac{(r_{\max}^1-2)(r_{\max}^1-1)}{2} + 1$. Therefore, from {{line 5}} of {\bf{Algorithm 1}}, {\bf{Algorithm 2}} is executed.
 
\textbf{Phase 1:} In this phase, clients $\mathcal{C}_2$ and $\mathcal{C}_3$ perform transmissions. First, $\mathcal{C}_2$ sends the coded message $W_{1}^{1,2} = x_5 \oplus x_{10}$, which consists of the first message from its overlap with $\mathcal{C}_1$ and the first from its overlap with $\mathcal{C}_3$. This allows $\mathcal{C}_1$ to decode $x_{10}$ (since $\mathcal{C}_1$ has already decoded $x_{10}$) and $\mathcal{C}_3$ to decode $x_5$. Next, $\mathcal{C}_3$ transmits $W_{1}^{1,3} = x_{10} \oplus x_{16}$, combining the first message from its overlap with $\mathcal{C}_2$ and the first from its overlap with $\mathcal{C}_4$. From this, $\mathcal{C}_1$ and $\mathcal{C}_2$ decode $x_{16}$, while $\mathcal{C}_4$ decodes $x_{10}$; having obtained $x_{10}$, $\mathcal{C}_4$ can now also decode $x_5$ from $W_{1}^{1,2}$. Finally, $\mathcal{C}_3$ sends $W_{2}^{1,3} = x_{10} \oplus x_{13}$, formed from the first message of its overlap with $\mathcal{C}_2$ and its first unique message, enabling $\mathcal{C}_1$, $\mathcal{C}_2$, and $\mathcal{C}_4$ to each decode $x_{13}$. According to line 9 of \textbf{Algorithm 2}, the total number of transmissions in this phase $T^1 = 3$.

\textbf{Phase 2:} Here, client $\mathcal{C}_4$ transmits. It begins with $W_{1}^{1,4} = x_{16} \oplus x_{23}$, which encodes the first message from its overlap with $\mathcal{C}_3$ and the first from its overlap with $\mathcal{C}_5$. This allows $\mathcal{C}_1$, $\mathcal{C}_2$, and $\mathcal{C}_3$ to decode $x_{23}$, while $\mathcal{C}_5$ decodes $x_{16}$. With $x_{16}$ now known, $\mathcal{C}_5$ can sequentially decode $x_{10}$, $x_5$, and $x_{13}$ from the earlier transmissions $W_1^{1,3}$, $W_1^{1,2}$, and $W_2^{1,3}$. Next, $\mathcal{C}_4$ sends $W_{2}^{1,4} = x_{16} \oplus x_{19}$, combining its overlap message with $\mathcal{C}_3$ and its first unique message, so that $\mathcal{C}_1$, $\mathcal{C}_2$, $\mathcal{C}_3$, and $\mathcal{C}_5$ all decode $x_{19}$. At this point, client $\mathcal{C}_5$ has reached its target level because $\mathcal{C}_5$'s side-information set already contains 11 messages, and it required only 5 more to reach its target and should therefore receive no further messages. Consequently, $\mathcal{C}_4$ transmits $W_{3}^{1,4} = x_{23} \oplus x_{24}$, formed from the first and second messages of its overlap with $\mathcal{C}_5$. This transmission provides no new messages to $\mathcal{C}_5$, but enables $\mathcal{C}_1$, $\mathcal{C}_2$, and $\mathcal{C}_3$ to decode $x_{24}$. After this phase, the first four clients each need exactly three more new messages to reach their target level.

\textbf{Phase 3:} In the final phase, client $\mathcal{C}_5$ transmits. It sends $W_{1}^{1,5} = x_{23} \oplus x_{26}$, encoding the first message from its overlap with $\mathcal{C}_4$ and its first unique message; $\mathcal{C}_1$, $\mathcal{C}_2$, $\mathcal{C}_3$, and $\mathcal{C}_4$ all decode $x_{26}$ from this. Next, $\mathcal{C}_5$ transmits $W_{2}^{1,5} = x_{23} \oplus x_{27}$, similarly formed from the same overlap message and its second unique message, allowing the same four clients to decode $x_{27}$. Finally, $\mathcal{C}_5$ sends $W_{3}^{1,5} = x_{23} \oplus x_{28}$, combining the overlap message with its third unique message, which provides $x_{28}$ to $\mathcal{C}_1$, $\mathcal{C}_2$, $\mathcal{C}_3$, and $\mathcal{C}_4$. 

With these transmissions, the first five clients have all reached their target message counts. Next, we move to the next recursive level. Here, $l = 2$, $C^2 = 4,K^{2} = 12$ and $\mathcal{C}^2 =\{\mathcal{C}_i, i \in [6,9]\}$. From {\bf{Step 1}} of {\bf{Algorithm 1}}, $r_{\max}^2 = 4$. Hence, the last four clients will be considered in this level. Here $C^2 = 4$ and $\frac{(r_{\max}^2-2)(r_{\max}^2-1)}{2} + 1 = 4$, so $C^2 = \frac{(r_{\max}^1-2)(r_{\max}^1-1)}{2} + 1$. Therefore, {\bf{Algorithm 3}} is executed.

\textbf{Phase 1:} In this phase, client $\mathcal{C}_7$ transmits a coded message composed of the first and second messages from its overlap with $\mathcal{C}_6$ and the first message from its overlap with $\mathcal{C}_8$, specifically $W_{1}^{2,2} = x_{40} \oplus x_{41} \oplus x_{50}$. This construction ensures that only $\mathcal{C}_6$ can decode a new message ($x_{50}$) from this transmission.

\textbf{Phase 2:} Here, client $\mathcal{C}_8$ transmits the coded message $W_{1}^{2,3} = x_{50} \oplus x_{61}$, formed from the first message in its overlap with $\mathcal{C}_7$ and the first message in its overlap with $\mathcal{C}_9$. This allows $\mathcal{C}_6$ and $\mathcal{C}_7$ to decode $x_{61}$, while $\mathcal{C}_9$ decodes $x_{50}$. After this transmission, client $\mathcal{C}_9$ has reached its target, leaving clients $\mathcal{C}_6$, $\mathcal{C}_7$, and $\mathcal{C}_8$ each requiring two more new messages.

\textbf{Phase 3:} In the final phase, client $\mathcal{C}_9$ makes two transmissions. First, it transmits $W_{1}^{2,4} = x_{61} \oplus x_{64}$, combining the first message from its overlap with $\mathcal{C}_8$ and its first unique message, enabling $\mathcal{C}_6$, $\mathcal{C}_7$, and $\mathcal{C}_8$ to decode $x_{64}$. Second, it transmits $W_{2}^{2,4} = x_{61} \oplus x_{65}$, formed similarly from the same overlap message and its second unique message, which provides $x_{65}$ to the same three clients. 

With these transmissions, all  clients have reached their exact target message counts.  

\end{example}

\section{Proof of Correctness of Algorithm 1}
\label{sec:proof-alg1}

This section establishes the correctness of Algorithm 1, the main recursive algorithm. The algorithm operates across iterative levels. At each level \(l\), it considers a client set \(\mathcal{C}^l\) of size \(C^l\), where the side-information set of the first client \(\mathcal{C}_1^l\) has cardinality \(K^l\). At every level, the algorithm determines a parameter \(r_{\max}^l\) using inequality (\ref{rmax}). This parameter specifies how many clients from the current set can be fully served at that level.

Based on the current client count \(C^l\), Algorithm 1 selects an appropriate transmission scheme: Algorithm \ref{alg:general-case} for the general case, or Algorithm \ref{alg:special-case} when a specific equality condition holds. These schemes are designed so that transmissions at level \(l\) satisfy only the first \(r_{\max}^l\) clients in \(\mathcal{C}^l\), leaving the remaining clients in \(\mathcal{C}^l\) entirely unaffected. This property is formally guaranteed by Lemmas \ref{lemma:alg2} and \ref{lemma:alg3}, proven in Sections \ref{sec:proof-alg2} and \ref{sec:proof-alg3}, respectively.

Furthermore, Lemma \ref{lemma:general} ensures that clients outside the current recursive set \(\mathcal{C}^l\) cannot decode any messages transmitted at level \(l\). This containment is crucial for the recursive structure.

\begin{lemma}
\label{lemma:alg2}
    At each level \(l\), if \(C^l \neq \frac{(r_{\max}^l-2)(r_{\max}^l-1)}{2} + 1\), then Algorithm \ref{alg:general-case} provides \(C^l\) transmissions that enable the first \(r_{\max}^l\) clients in \(\mathcal{C}^l\) (each client \(\mathcal{C}_i^l\) with \(i \leq r_{\max}^l\)) to achieve the target size \(T=K+C\). No other client in \(\mathcal{C}^l\) except these first \(r_{\max}^l\) clients can decode any message transmitted during level \(l\).
\end{lemma}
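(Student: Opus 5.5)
Write $r=r_{\max}^l$. The proof has three parts: count the transmissions, track what each client $\mathcal{C}_i^l$ decodes, and show that no client with index above $r$ decodes anything.

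\textbf{Counting.} Phase 1 gives $\sum_{i=1}^{r-3} i=\frac{(r-2)(r-3)}{2}=T^l$ transmissions. Phase 2 gives $1+(C^l-r-T^l)+\bigl(r-3-(C^l-r-T^l)\bigr)=r-2$ transmissions. Phase 3 gives $C^l-r-T^l+2$ transmissions. The total is $T^l+(r-2)+(C^l-r-T^l+2)=C^l$, as claimed.

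Both loops in Phase 2 must be well defined, which requires $0\le C^l-r-T^l\le r-3$. This follows from (\ref{rmax}) together with $C^l\neq \frac{(r-2)(r-1)}{2}+1$. Indeed, $\frac{(r-2)(r-1)}{2}=T^l+r-2$, so $C^l\ge T^l+r$. Also $\frac{r(r-1)}{2}=T^l+2r-3$, so $C^l-r-T^l\le r-3$.

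The messages referenced must exist.
\begin{itemize}
\item The terminal overlap indices $L_{q+1}$ with $q\le r-3$ need $P\ge r-2$, which is the hypothesis $P\ge r_{\max}-2$. Strictly, this needs $r^l_{\max}\le r_{\max}$ at deeper levels, which holds because $C^l$ decreases.
\item The unique indices need $K^l-2P\ge$ (number used), which is guaranteed by $K\ge 2P$ and the growth of the sets.
\end{itemize}

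\textbf{Decoding by the served clients.} I would proceed by a chain argument, following the pattern of the example.

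\begin{itemize}
\item \emph{Clients $\mathcal{C}_i^l$ with $i\le r-2$.} The transmission $W_1^{l,i+1}=\mathcal{I}_{F_1}\oplus\mathcal{I}_{L_1}$ from $\mathcal{C}_{i+1}^l$ contains $\mathcal{I}_{F_1}^{l,i+1}$, which $\mathcal{C}_i^l$ owns as part of its terminal overlap. So $\mathcal{C}_i^l$ learns $\mathcal{I}_{L_1}^{l,i+1}$. That message is $\mathcal{I}_{F_1}^{l,i+2}$, which unlocks the next client's transmissions, and so on. Forward propagation therefore gives every lower client all later messages.
\item \emph{Backward propagation.} Client $\mathcal{C}_{i+2}^l$ owns $\mathcal{I}_{L_1}^{l,i+1}$ and decodes $\mathcal{I}_{F_1}^{l,i+1}$ from $W_1^{l,i+1}$. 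It then peels earlier transmissions backward, since each of those contains a message it has just learned.
\item \emph{Client $\mathcal{C}_r^l$.} It obtains the Phase 1 and Phase 2 "$F_1$" items through the backward chain. It cannot decode the $L_1\oplus L_{q+1}$ transmissions, because it holds both terms.
\end{itemize}

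Then I would tally the new messages per client.
\begin{itemize}
\item $\mathcal{C}_r^l$ gains $T^l+1+(C^l-r-T^l)=C^l-r+1$ new messages.
\item The target for $\mathcal{C}_i^l$ requires $T-|\mathcal{I}_i^l|=C^l-i+1$ new messages, using $T=K^l+C^l$ at level $l$, which is preserved by Step 2.
\end{itemize}
So $\mathcal{C}_r^l$ is exact. For $i<r$, I would count the decodable transmissions among the $C^l$ total, subtracting those whose XOR terms both lie in $\mathcal{I}_i^l$ or both lie outside it. The subtracted transmissions are:
\begin{itemize}
\item its own transmissions;
\item for $\mathcal{C}_{r-1}^l$, additionally the Phase 3 items it cannot see.
\end{itemize}
The expected result is exactly $r-i$ non-decodable transmissions, giving $C^l-r+i$... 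I would fix this count carefully by listing, for each $i$, the transmissions that give it nothing new. I expect this bookkeeping, and in particular distinctness of the decoded messages, to be the main obstacle. Distinctness would follow because each decoded symbol is a different $F_1$, $L_q$ or $U_j$ entry of a different client.

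\textbf{Non-leakage.} Consider $\mathcal{C}_{r+1}^l$ and beyond. Every transmission at level $l$ involves only messages of $\mathcal{C}_1^l,\dots,\mathcal{C}_r^l$. The only ones touching $\mathcal{I}^l_{r+1}$ are those containing $\mathcal{I}_{L}^{l,r}$ entries. In the algorithm, $\mathcal{C}_r^l$ only sends $F_1\oplus U_j$, so nothing at level $l$ contains a message of $\mathcal{C}_{r+1}^l$. Moreover, no transmission contains exactly one unknown for it. Hence, by linear independence over the messages it lacks, it decodes nothing. Lemma~\ref{lemma:general} then handles clients outside $\mathcal{C}^l$.
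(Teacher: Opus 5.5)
\textbf{There is a genuine gap.} Your count of the $C^l$ transmissions is fine, and so are the feasibility checks. Your use of $C^l\neq\frac{(r-2)(r-1)}{2}+1$ to get $C^l-r-T^l\ge 0$ is actually more careful than the paper's derivation from $C^l-T^l>r-2$. The tally for $\mathcal{C}_r^l$ is also correct. But the heart of the lemma is left open: you do not show that every $\mathcal{C}_i^l$ with $i<r$ ends with exactly $T$ messages. The heuristic you sketch for this is wrong in two places.

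\emph{First error.} A transmission whose two terms both lie outside $\mathcal{I}_i^l$ is not thereby undecodable. $\mathcal{C}_1^l$ holds neither term of $W_1^{l,3}=\mathcal{I}_{F_1}^{l,3}\oplus\mathcal{I}_{L_1}^{l,3}$. Yet it decodes it after obtaining $\mathcal{I}_{L_1}^{l,2}=\mathcal{I}_{F_1}^{l,3}$ from $W_1^{l,2}$, by exactly the forward chain you describe.

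\emph{Second error.} $\mathcal{C}_{r-1}^l$ decodes every Phase~3 transmission, since $\mathcal{I}_{F_1}^{l,r}=\mathcal{I}_{L_1}^{l,r-1}$ is its own message.

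Your ``expected'' count $C^l-r+i$ also contradicts your own target $C^l-i+1$. For $i=1$, $\mathcal{C}_1^l$ sends nothing and must decode all $C^l$ transmissions; in the worked example with $C=9$, $\mathcal{C}_1$ decodes all nine.

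\emph{The correct bookkeeping.} The paper does this phase by phase via $T_i^{l,1},T_i^{l,2},T_i^{l,3}$. For $i<r$, the only uninformative transmissions are the $i-1$ that $\mathcal{C}_i^l$ sends itself. For $\mathcal{C}_r^l$, they are its own transmissions plus the $\mathcal{I}_{L_1}\oplus\mathcal{I}_{L_{q+1}}$ ones, $r-1$ in total. Every other transmission becomes decodable through the forward and backward chains, which gives $C^l-(i-1)$ new messages. Transmissions with both terms known add no information, so at most that many can be learned. This upper bound is what the security half of the lemma needs, and you do not state it even for $\mathcal{C}_r^l$.

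\textbf{A compact way to close the gap.} Write $r=r_{\max}^l$ as you do, and view each level-$l$ transmission as an edge joining its two messages.
\begin{itemize}
\item The edges form a tree on $C^l+1$ messages: the path $\mathcal{I}_{F_1}^{l,2},\mathcal{I}_{F_1}^{l,3},\dots,\mathcal{I}_{F_1}^{l,r}$ (using $\mathcal{I}_{L_1}^{l,k}=\mathcal{I}_{F_1}^{l,k+1}$), with pendant leaves $\mathcal{I}_{U_j}^{l,k}$ and $\mathcal{I}_{L_{q+1}}^{l,r-1}$.
\item For each $i\le r$, one checks that $\mathcal{C}_i^l$ holds exactly $i$ of these vertices. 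Since the tree is connected, it learns exactly the other $C^l+1-i$ vertices and nothing more. This matches the target with equality.
\item For $i>r$, the client holds no vertex. Every GF(2) combination of two-term XORs has even support, so it can never isolate a single message.
\end{itemize}
The last point is what your non-leakage step needs. The observation that ``no transmission has exactly one unknown'' does not by itself rule out decoding from combinations of transmissions.
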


\begin{lemma}
\label{lemma:alg3}
    At each level \(l\), if \(C^l = \frac{(r_{\max}^l-2)(r_{\max}^l-1)}{2} + 1\), then Algorithm \ref{alg:special-case} provides \(C^l\) transmissions that enable the first \(r_{\max}^l\) clients in \(\mathcal{C}^l\) (each client \(\mathcal{C}_i^l\) with \(i \leq r_{\max}^l\)) to achieve the target size \(T=K+C\). No other client in \(\mathcal{C}^l\) except these first \(r_{\max}^l\) clients can decode any message transmitted during level \(l\).
\end{lemma}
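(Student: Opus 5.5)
We prove Lemma~\ref{lemma:alg3} by tracking, transmission by transmission, which messages each client in \(\mathcal{C}^l\) can decode under Algorithm~\ref{alg:special-case}. This mirrors the argument for Lemma~\ref{lemma:alg2}. Write \(r = r_{\max}^l\). The special-case condition is \(C^l = (r-2)(r-1)/2+1\).

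\textbf{Counting.} Phase~1 produces one transmission from \(\mathcal{C}_2^l\) and \(1+(i-1)=i\) transmissions from \(\mathcal{C}_{i+2}^l\) for \(i\in[1,r-4]\). That gives \(1+(r-4)(r-3)/2 = T^l\) transmissions. Phase~2 produces \(1+(r-4)=r-3\) transmissions and Phase~3 produces \(C^l-r-T^l+3\). The total is therefore \(C^l\), as claimed. Substituting the special-case value of \(C^l\), Phase~3 has \(C^l-r-T^l+3 = r-2\) transmissions. This is nonnegative because \(r\ge 3\) whenever \(C^l>2\). We also verify that every index used is valid, which is where the hypotheses enter:
\begin{itemize}
\item \(P\ge 2\) is needed for \(\mathcal{I}_{F_2}\);
\item \(K-2P\ge r-2\) unique messages are needed in Phase~3, and this follows from \(K\ge 2P\) together with the growth of \(K^l\).
\end{itemize}
We will state the needed inequality as a bookkeeping fact.

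\textbf{Client-by-client decoding, in order of transmission.}
\begin{itemize}
\item The triple XOR \(\mathcal{I}_{F_1}^{l,2}\oplus\mathcal{I}_{F_2}^{l,2}\oplus\mathcal{I}_{L_1}^{l,2}\) is decodable only by \(\mathcal{C}_1^l\), which holds both \(F\) messages. It yields \(\mathcal{I}_{L_1}^{l,2}\). \(\mathcal{C}_3^l\) knows only the \(L_1\) term, and every other client knows none of the three terms.
\item Each Phase~1 transmission \(\mathcal{I}_{F_1}^{l,i+2}\oplus\mathcal{I}_{L_1}^{l,i+2}\) or \(\mathcal{I}_{F_1}^{l,i+2}\oplus\mathcal{I}_{U_j}^{l,i+2}\) is decodable immediately by \(\mathcal{C}_{i+1}^l\). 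Inductively, every earlier client \(\mathcal{C}_1^l,\dots,\mathcal{C}_i^l\) has already decoded \(\mathcal{I}_{F_1}^{l,i+2}\) (it equals \(\mathcal{I}_{L_1}^{l,i+1}\)), so they can decode as well. The downstream client \(\mathcal{C}_{i+3}^l\) holds \(\mathcal{I}_{L_1}^{l,i+2}\) and recovers \(\mathcal{I}_{F_1}^{l,i+2}\). It then back-substitutes through the chain of earlier transmissions, exactly as \(\mathcal{C}_4\) and \(\mathcal{C}_5\) do in Example~2.
\item Back-substitution stops at the triple XOR, because one unknown \(F_2\) remains there. This is the purpose of Algorithm~\ref{alg:special-case}: it deprives downstream clients of exactly one message relative to Algorithm~\ref{alg:general-case}.
\end{itemize}

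\textbf{Tally.} We then count, for each \(i\le r\), the number of distinct new messages client \(\mathcal{C}_i^l\) obtains, and check that it equals \(T-|\mathcal{I}_i^l| = C^l-i+1\) at level \(l\). Here \(T\) is relative to \(K^l+C^l = K+C\).
\begin{itemize}
\item Client \(\mathcal{C}_r^l\) decodes only from Phase~1 and from the first Phase~2 transmission.
\item In the remaining Phase~2 transmissions \(\mathcal{I}_{L_1}\oplus\mathcal{I}_{U_j}\), the recipient \(\mathcal{C}_r^l\) knows \(L_1\) but not \(U_j\). These transmissions serve only the upstream clients, which received \(L_1\) from \(W_1^{l,r-1}\).
\item Phase~3 serves \(\mathcal{C}_1^l,\dots,\mathcal{C}_{r-1}^l\) with \(r-2\) new messages each.
\end{itemize}
Summing the per-phase contributions gives closed forms such as \(C^l-i+1\). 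We verify these using the triangular-number identity.

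\textbf{Non-leakage.} Every transmission involves messages from \(\mathcal{I}_{i}^l\) for \(i\le r\). The only term touching \(\mathcal{C}_{r+1}^l\) would be \(\mathcal{I}_{L}^{l,r}\), and it never appears, since Phase~3 uses \(F_1\) and \(U_j\) only. Hence \(\mathcal{C}_{r+1}^l,\dots\) see only XORs with at least one unknown term. Any client outside this set has at least two unknown terms and cannot decode.

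\textbf{Main obstacle.} The hardest step is the exact count, meaning no overshoot, for the boundary clients \(\mathcal{C}_{r-1}^l\) and \(\mathcal{C}_r^l\). For these we must rule out a combination of earlier transmissions exposing an extra message, for example \(F_2\) via linear combinations. We will handle this by an explicit rank argument. For each client, we restrict the received XORs to its unknown messages and show that the resulting linear system has rank equal to the claimed count, with \(F_2\) (or \(U_j\)) not in the span.
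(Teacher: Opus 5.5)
Your route is the paper's: forward and backward decoding chains, per-phase tallies, and a total of $C^l$ transmissions. However, your tally for the last client $\mathcal{C}_r^l$ is wrong, so the closed-form check you defer to the triangular-number identity would fail for it.

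You claim that $\mathcal{C}_r^l$ decodes only from Phase~1 and from $W_1^{l,r-1}$, and that the Phase~2 transmissions $\mathcal{I}_{L_1}^{l,r-1}\oplus\mathcal{I}_{U_j}^{l,r-1}$ ``serve only the upstream clients.'' But $\mathcal{I}_{L_1}^{l,r-1}=\mathcal{I}_{F_1}^{l,r}$ lies in $\mathcal{C}_r^l$'s own side-information. So, exactly as your phrase ``knows $L_1$ but not $U_j$'' says, $\mathcal{C}_r^l$ decodes every $\mathcal{I}_{U_j}^{l,r-1}$ with $j\in[1,r-4]$. These messages are needed. $\mathcal{C}_r^l$ obtains $T^l-1$ messages from Phase~1 (everything except the triple XOR) plus all $r-3$ Phase~2 transmissions. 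That gives $\frac{(r-3)(r-4)}{2}+(r-3)=\frac{(r-2)(r-3)}{2}=C^l-r+1$, which is the paper's count $T^l+r-4$. Your pattern gives $(T^l-1)+1=T^l$, which falls short by $r-4$. For example, with $C^l=7$ and $r=5$, client $\mathcal{C}_5^l$ needs three messages ($\mathcal{I}_{F_1}^{l,4}$, $\mathcal{I}_{U_1}^{l,4}$, $\mathcal{I}_{F_1}^{l,3}$), but your tally yields two. You seem to have imported the behaviour of the second category of Algorithm~2, $\mathcal{I}_{L_1}\oplus\mathcal{I}_{L_{q+1}}$, where $\mathcal{C}_r^l$ already holds both terms. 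In Algorithm~3, the switch to unique messages of $\mathcal{C}_{r-1}^l$ is precisely what compensates $\mathcal{C}_r^l$ for the Phase~1 message that the triple XOR withholds. (The paper justifies this step by $\mathcal{C}_r^l$ knowing $\mathcal{I}_{F_1}^{l,r-1}$; the operative fact is that it knows $\mathcal{I}_{L_1}^{l,r-1}$, but the paper's count is right.)

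There are also three smaller points.

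First, back-substitution does not halt at the triple XOR because ``one unknown $F_2$'' remains; a single unknown term would simply be decoded. A downstream client holds, or has recovered, $\mathcal{I}_{L_1}^{l,2}$, but knows neither $\mathcal{I}_{F_1}^{l,2}$ nor $\mathcal{I}_{F_2}^{l,2}$. These two messages occur in no other transmission, so every combination containing the triple XOR keeps both. Combine this with the parity fact that any XOR of two-term transmissions involves an even number of messages. Together these give what your non-leakage paragraph needs for $\mathcal{C}_i^l$, $i>r$, who hold none of the transmitted messages. Arguing transmission by transmission (``at least two unknown terms'') does not exclude linear combinations.

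Second, the same observation dissolves your ``main obstacle.'' For each $i\le r$, the decoded set is every message occurring at level $l$ that $\mathcal{C}_i^l$ lacks, except $\mathcal{I}_{F_1}^{l,2}$ and $\mathcal{I}_{F_2}^{l,2}$. So overshoot is impossible, and no rank computation is needed.

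Third, the bookkeeping inequality is not $K-2P\ge r-2$, which is false when $K=2P$. What is needed is that $\mathcal{C}_r^l$ has $K^l+r-1-2P\ge r-1$ unique messages, and this follows from $K^l\ge K\ge 2P$.
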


\begin{lemma}
\label{lemma:general}
    At each level \(l\), no client in \(\mathcal{C} \backslash \mathcal{C}^l\) can decode any message transmitted during this level.
\end{lemma}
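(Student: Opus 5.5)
We prove Lemma~\ref{lemma:general} by induction on the level $l$. The argument relies on two structural facts. First, every transmission at level $l$ is an XOR of messages taken from the side-information of one sender in $\mathcal{C}^l$. Second, the LPS-FO structure makes side-information sets of non-adjacent clients disjoint.

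Fix level $l$ and let $\mathcal{C}\backslash\mathcal{C}^l$ be the clients already served at levels $1,\dots,l-1$. Each of these is some $\mathcal{C}_j$ with $j$ smaller than every original index in $\mathcal{C}^l$.

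\textbf{Support of level-$l$ transmissions.} Inspecting Algorithms~\ref{alg:general-case} and~\ref{alg:special-case}, every transmission is sent by some $\mathcal{C}_i^l$ with $2\le i\le r_{\max}^l$. Its support uses only $\mathcal{I}_{F}^{l,i}$, $\mathcal{I}_{L}^{l,i}$ and $\mathcal{I}_{U}^{l,i}$, so it lies entirely in the original side-information set of $\mathcal{C}_i^l$. A message in the initial overlap $\mathcal{I}_F^{l,i}$ is shared only with $\mathcal{C}_{i-1}^l\in\mathcal{C}^l$, since $i\ge 2$. Hence no message in the support belongs to the original side-information of any client indexed below $\mathcal{C}_1^l$. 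This uses $K\ge 2P$, which makes the initial and terminal overlaps disjoint, so that non-adjacent sets share nothing.

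In the base cases of Algorithm~1 the senders are $\mathcal{C}_1^l,\mathcal{C}_2^l$, or the designated original client $\mathcal{C}_{C-1}^1$. The case $C^l=1$ needs separate care: there the sender is $\mathcal{C}_{C-1}$ and the transmission is $\mathcal{I}_{F_1}\oplus\mathcal{I}_{U_L}$. I would check that $\mathcal{C}_{C-2}$ is the only other client holding $\mathcal{I}_{F_1}$, that it is already saturated at $T$, and that $\mathcal{I}_{U_L}$ is unknown to it, so it still obtains at most the XOR and decodes nothing.

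\textbf{Knowledge of previously served clients.} By the induction hypothesis together with Lemmas~\ref{lemma:alg2} and~\ref{lemma:alg3} applied at earlier levels, a client $\mathcal{C}_j\in\mathcal{C}\backslash\mathcal{C}^l$ was served at some level $l'<l$. At the end of level $l'$ its knowledge is its original set together with messages decoded from level-$l'$ transmissions. All of those messages lie in side-information sets of clients in $\mathcal{C}^{l'}$ with index at most $r_{\max}^{l'}$ within that level. These clients all have original indices below those in $\mathcal{C}^{l'+1}\supseteq\mathcal{C}^l$, except possibly through the terminal overlap of $\mathcal{C}^{l'}_{r_{\max}^{l'}}$ with $\mathcal{C}^{l'+1}_1$. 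I would show that the level-$l'$ schemes never place terminal-overlap messages of $\mathcal{C}^{l'}_{r_{\max}^{l'}}$ in any transmission. In Phase 3 only $\mathcal{I}_F$ and $\mathcal{I}_U$ of that client are used, and the $\mathcal{I}_{U_j}$ indices used are at most $C^l-r_{\max}^l-T^l+2\le r_{\max}^l-1$, which is within its unique segment. Consequently, the knowledge of $\mathcal{C}_j$ is disjoint from every side-information set of $\mathcal{C}_2^l,\dots,\mathcal{C}_{C^l}^l$. It may contain at most the initial-overlap messages of $\mathcal{C}_1^l$, which in any case never appear in level-$l$ transmissions sent by $\mathcal{C}_i^l$ for $i\ge 2$ (those transmissions touch $\mathcal{I}_F^{l,i}$, not $\mathcal{I}_F^{l,1}$).

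\textbf{Conclusion.} It follows that every level-$l$ coded packet has support disjoint from the current knowledge of $\mathcal{C}_j$. Each packet is a XOR of at least two messages, so $\mathcal{C}_j$ faces a linear system in unknowns it does not hold. The packets from a single sender all share a pivot such as $\mathcal{I}_{F_1}$ or $\mathcal{I}_{L_1}$, and packets from different senders are chained only through overlaps. A rank argument then shows that the number of unknowns always exceeds the number of equations on every connected block, so no individual message is determined. The main obstacle is this counting step: it requires verifying, for both algorithms and for the base cases, that combining packets from adjacent senders never isolates a single message for an outsider. I would handle it by treating the packets as edges of a graph on messages (with the triple XOR in Algorithm~\ref{alg:special-case} handled as a hyperedge) and showing that the graph is a forest in which every component has more vertices than edges. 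Decoding a message from such a system requires knowing at least one vertex of its component, and $\mathcal{C}_j$ knows none.
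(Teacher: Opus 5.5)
\textbf{Gap: the single-client base case $C^l=1$.} This is exactly where the lemma is delicate, and your argument there fails. You take the packet as $\mathcal{I}_{F_1}^{1,C-1}\oplus\mathcal{I}_{U_L}^{1,C-1}$, as Algorithm~1's pseudo-code writes it. You then conclude that $\mathcal{C}_{C-2}$, which holds $\mathcal{I}_{F_1}^{1,C-1}$ but not $\mathcal{I}_{U_L}^{1,C-1}$, ``decodes nothing''. That inference is backwards: a client that knows one term of a two-term XOR recovers the other. Being already at $T$ is what makes this a security violation, not what prevents it.

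Nor is $\mathcal{C}_{C-2}$ the only outsider holding that term. The condition $C^l=1$ forces $C^{l-1}=5$ and $r_{\max}^{l-1}=4$ (Algorithm~2), with $\mathcal{C}_{C-1}=\mathcal{C}^{l-1}_4$. In Phase~2 of level $l-1$, the clients $\mathcal{C}^{l-1}_1,\mathcal{C}^{l-1}_2$ decode $\mathcal{I}_{L_1}^{l-1,3}=\mathcal{I}_{F_1}^{l-1,4}=\mathcal{I}_{F_1}^{1,C-1}$; this is precisely what lets them decode Phase~3. Concretely, take $C=5$, $K=7$, $P=3$. Level~1 ends with $\mathcal{C}_4$ sending $x_{16}\oplus x_{19}$ and $x_{16}\oplus x_{20}$. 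The packet $x_{16}\oplus x_{22}$ then gives $x_{22}$ to $\mathcal{C}_1,\mathcal{C}_2,\mathcal{C}_3$, all already at $T=12$, while $\mathcal{C}_5$ cannot use it. So under your reading the lemma is false, and no finer check can rescue that step.

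\textbf{Fix.} You need the packet used in the paper's base-case section, $\mathcal{I}_{L_1}^{1,C-1}\oplus\mathcal{I}_{U_L}^{1,C-1}$ (here $x_{23}\oplus x_{22}$). Its first term lies in $\mathcal{I}_{C-1}\cap\mathcal{I}_C$, the terminal overlap of $\mathcal{C}^{l-1}_4$. By your own Phase~3 observation that overlap was never transmitted, so only $\mathcal{C}_{C-1}$ and $\mathcal{C}_C$ know it.

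You must also show that $\mathcal{I}_{U_L}^{1,C-1}$ was not sent in Phase~3 of level $l-1$. Your bound $C^l-r_{\max}^l-T^l+2\le r_{\max}^l-1$ does not give this: when $K^{l-1}=2P$ the unique segment has exactly $r_{\max}^{l-1}-1$ elements, so the bound allows the last one to be used. The specific count does give it: Phase~3 uses $5-4-1+2=2$ of at least $3$ unique messages, so the last one is unused.

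\textbf{The rest of your outline.} Outside this case your outline is sound, and more careful than the paper's one-line claim that an outsider knows neither term of any packet. The counting step you flag as the main obstacle is easy by parity, without any forest argument. Project onto the level-$l$ support, which is disjoint from the outsider's known messages and from all earlier-level packets. Every $\mathrm{GF}(2)$ combination of two-term packets then has even weight. The triple XOR's messages $\mathcal{I}_{F_1}^{l,2},\mathcal{I}_{F_2}^{l,2}$ occur in no other packet, so any combination using it contains both. Hence no unit vector is reachable.
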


At any recursive level \(l\), after computing \(r_{\max}^l\) and executing the chosen transmission scheme (Algorithm 2 or 3), the first \(r_{\max}^l\) clients achieve the target knowledge level \(T = K + C\). By the aforementioned lemmas, no other client in \(\mathcal{C}^l\) gains any information from these transmissions. Consequently, these satisfied clients are removed from subsequent steps. The remaining \(C^l - r_{\max}^l\) clients, with their original side-information sets intact, are re-indexed to form the client set for the next level:
\[
\mathcal{C}^{l+1} = \{\mathcal{C}_i^{l+1} = \mathcal{C}_{r_{\max}^l + i}^l : i \in [1, C^l - r_{\max}^l]\}.
\]
The parameters are updated as \(C^{l+1} = C^l - r_{\max}^l\) and \(K^{l+1} = K^l + r_{\max}^l\).

At any level $l$, the transmissions are constructed in such that, for any client outside $C_l$, neither of the two messages is any transmission is available in its side-information set. Hence, no such client can decode any transmission.

\begin{lemma}
\label{lemma side info structure}
At the beginning of each recursive level \(l\), the side-information sets maintain the LPS-FO structure with respect to the \(C^l\) clients in \(\mathcal{C}^l= \{\mathcal{C}_1^l, \mathcal{C}_2^l, \ldots, \mathcal{C}_{C^l}^l\}\), where the first client \(\mathcal{C}_1^l\) has a side-information set of cardinality \(K^l\).
\end{lemma}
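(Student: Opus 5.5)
We argue by induction on the level $l$, tracking the original indices of the clients. The induction hypothesis is the following. At the start of level $l$, $\mathcal{C}^l$ is a contiguous block $\{\mathcal{C}_{s_l+1},\ldots,\mathcal{C}_{s_l+C^l}\}$ of original clients, where $s_l=\sum_{m<l} r_{\max}^m$. Each such client still holds exactly its original side-information set $\mathcal{I}_{s_l+i}$. Finally, $K^l = K + s_l$.

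The base case $l=1$ is immediate. Here $s_1=0$, $C^1=C$ and $K^1=K$, and the LPS-FO structure is exactly Definition~\ref{def LPSFO} applied to the original instance.

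For the inductive step, assume the hypothesis at level $l$.

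First, we argue that the side-information sets of clients in $\mathcal{C}^{l+1}$ are untouched by the transmissions of level $l$. The clients $\mathcal{C}^l_{r_{\max}^l+1},\ldots,\mathcal{C}^l_{C^l}$ lie in $\mathcal{C}^l$ but outside the first $r_{\max}^l$ clients. By Lemma~\ref{lemma:alg2} (if Algorithm~2 runs) or Lemma~\ref{lemma:alg3} (if Algorithm~3 runs), these clients decode nothing at level $l$. Clients already removed at earlier levels belong to $\mathcal{C}\backslash\mathcal{C}^l$. By Lemma~\ref{lemma:general} they also decode nothing new, although that fact concerns security rather than the present structure claim. Hence every client of $\mathcal{C}^{l+1}$ still holds its original set $\mathcal{I}_{s_l+r_{\max}^l+i}$.

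Second, we verify the LPS-FO properties for the re-indexed family $\mathcal{C}_i^{l+1}=\mathcal{C}_{s_{l+1}+i}$ with $s_{l+1}=s_l+r_{\max}^l$. Each property is inherited from the original instance restricted to a contiguous index window.
\begin{itemize}
\item Consecutiveness of messages and the fixed overlap of $P$ messages between consecutive clients hold for any contiguous sub-block of an LPS-FO family. The relevant pairs $(\mathcal{C}_{j},\mathcal{C}_{j+1})$ are exactly original consecutive pairs.
\item For linear progression, $|\mathcal{I}^{l+1}_i| = K+(s_{l+1}+i-1) = K^{l+1}+(i-1)$. This uses $K^{l+1}=K^l+r_{\max}^l=K+s_{l+1}$, which matches the update in Step~2 of Algorithm~1.
\end{itemize}

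One cosmetic point remains. Definition~\ref{def LPSFO} labels the first client's messages as $x_1,\ldots,x_K$. We handle this by a relabeling of message indices, shifting by the index of the first message of $\mathcal{C}_{s_{l+1}+1}$. The relabeling is harmless because the algorithms only refer to the positional segments $\mathcal{I}_{F_j},\mathcal{I}_{L_j},\mathcal{I}_{U_j}$.

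The main obstacle is the first step: showing that the side information of the surviving clients is genuinely unchanged. Everything there rests on the second halves of Lemmas~\ref{lemma:alg2} and~\ref{lemma:alg3}. We must also handle a subtlety. Decoding is not per-transmission; a client could combine several level-$l$ transmissions, or level-$l$ transmissions with earlier ones, to learn something. We would address this directly by noting that every message appearing in a level-$l$ transmission has original index at most the last message of $\mathcal{C}^l_{r_{\max}^l}$, namely its $\mathcal{I}_{L_1}$ or earlier. The only overlap with $\mathcal{C}^l_{r_{\max}^l+1}$ is through terminal-overlap messages $\mathcal{I}_{L_q}^{l,r_{\max}^l}$. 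By inspecting Phase~3 of Algorithms~2 and~3, and Phase~2 where $\mathcal{I}_{L_1}\oplus\mathcal{I}_{L_{q+1}}$ is sent by $\mathcal{C}^l_{r_{\max}^l-1}$, we confirm that these terminal-overlap messages of $\mathcal{C}^l_{r_{\max}^l}$ never appear. The span of the level-$l$ transmissions therefore involves no message known to later clients, so no unknown message can be isolated by them.

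Given this, the induction closes, and the parameters $C^{l+1}=C^l-r_{\max}^l$ and $K^{l+1}$ are consistent with the structure.
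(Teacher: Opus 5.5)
Your main line is the paper's own proof in the Recursive Structure Preservation section. You induct on $l$ and use the second halves of Lemmas~\ref{lemma:alg2} and~\ref{lemma:alg3} to conclude that the surviving clients' side information is untouched. You then note that a contiguous window of an LPS-FO family is again LPS-FO, with $|\mathcal{I}^{l+1}_i| = K^l + r_{\max}^l + i - 1 = K^{l+1} + (i-1)$. The offsets $s_l$ and the relabeling remark make precise what the paper leaves implicit. Taking the cited lemmas as given, this part is correct.

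The paragraph where you go beyond the paper does not resolve the subtlety you raise. You check that no message occurring in a level-$l$ transmission is known to $\mathcal{C}^l_{r_{\max}^l+1},\ldots,\mathcal{C}^l_{C^l}$. This uses $K\ge 2P$, so that the initial and terminal segments of $\mathcal{C}^l_{r_{\max}^l}$ are disjoint. Also, what you mean is that every such message precedes $\mathcal{I}_{L_1}^{l,r_{\max}^l}$; the last message of $\mathcal{C}^l_{r_{\max}^l}$ is $\mathcal{I}_{L_P}^{l,r_{\max}^l}$, not $\mathcal{I}_{L_1}^{l,r_{\max}^l}$.

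The inference ``no involved message is known, hence none can be isolated'' does not follow. A client with no relevant side information still obtains $c$ from $a\oplus b$ and $a\oplus b\oplus c$. What you need is that the $\mathbb{F}_2$-span of the transmissions contains no vector of weight one, and this does hold here:
\begin{itemize}
\item In Algorithm~2 every transmission is a two-term XOR, so every combination has even weight.
\item In Algorithm~3 the only odd-weight transmission is $W_1^{l,2}$. It contains $\mathcal{I}_{F_1}^{l,2}$ and $\mathcal{I}_{F_2}^{l,2}$, which occur in no other transmission: $\mathcal{C}_1^l$ never transmits, $\mathcal{C}_2^l$ transmits only once, and by $K\ge 2P$ no later transmitter holds messages shared by $\mathcal{C}_1^l$ and $\mathcal{C}_2^l$. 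So any combination involving $W_1^{l,2}$ has weight at least two.
\end{itemize}
The supports of different levels are disjoint, so the same parity argument applies to the union of all transmissions up to level $l$. That also covers combinations with earlier levels.

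The paper's own justification at this point, that any two transmissions share at most one message, is equally insufficient by itself. The parity observation is what actually closes the argument.
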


The proof of Lemma \ref{lemma side info structure} is provided in Section \ref{sec: recursive-structure}. This recursive process continues until the number of remaining clients is reduced to the base cases \(C^l = 2\) or \(C^l = 1\), which are handled explicitly in Algorithm 1.

\begin{lemma}
\label{base case lemma}
    For the base cases in Algorithm 1, \(N(2)=3\) (three transmissions are required when two clients remain) and \(N(1)=1\) (one transmission is required when one client remains).
\end{lemma}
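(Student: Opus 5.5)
The plan is to verify the two base cases directly. For each one I would check three things: every transmitted XOR is formed from messages its sender actually holds; each remaining client decodes exactly the number of new messages it is missing; and no other client, inside or outside $\mathcal{C}^l$, decodes anything new. Throughout I would rely on the fact that $K \geq 2P$ makes non-consecutive side-information sets disjoint, so a terminal-overlap message $\mathcal{I}_{L_j}^{l,i}$ lies only in $\mathcal{I}_i$ and $\mathcal{I}_{i+1}$, and a unique message lies only in its owner's set. By Lemma~\ref{lemma side info structure}, the clients at the final level are the last original clients and keep the LPS-FO structure. Together with Lemma~\ref{lemma:general}, this lets me argue message by message.

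\textbf{Case $C^l=2$.} Here $\mathcal{C}_1^l=\mathcal{C}_{C-1}$ and $\mathcal{C}_2^l=\mathcal{C}_C$, with $K+C-2$ and $K+C-1$ messages respectively. They therefore need $2$ and $1$ new messages to reach $T=K+C$.
\begin{itemize}
\item I would first check that the required unique messages exist. $\mathcal{C}_C$ has no terminal overlap, so it has $K+C-1-P \geq 2$ unique messages. $\mathcal{C}_{C-1}$ has $K+C-2-2P \geq 1$ unique messages, since $C\geq 3$.
\item $\mathcal{I}_{F_1}^{l,2}$ is the first message of $\mathcal{I}_{C-1}\cap\mathcal{I}_C$. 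Hence $\mathcal{C}_{C-1}$ decodes $\mathcal{I}_{U_1}^{l,2}$ and $\mathcal{I}_{U_2}^{l,2}$ from $W_1^{l,2}$ and $W_2^{l,2}$, which are exactly two new messages. It learns nothing from $W_1^{l,1}$, since it is the sender.
\item $\mathcal{C}_C$ knows $\mathcal{I}_{L_1}^{l,1}$ and so decodes exactly $\mathcal{I}_{U_1}^{l,1}$. It gains nothing from its own transmissions.
\item Earlier-level clients have only learned messages lying in $\bigcup_{j\le C-2}\mathcal{I}_j$. Each of the three XORs involves only messages of $\mathcal{I}_{C-1}\setminus\mathcal{I}_{C-2}$ or $\mathcal{I}_C$, so those clients know neither summand and decode nothing. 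This gives $N(2)=3$ with exactness.
\end{itemize}

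\textbf{Case $C^l=1$.} Only $\mathcal{C}_C$ remains, and it needs exactly one message. The single transmission comes from $\mathcal{C}_{C-1}$, which was the $r_{\max}$-th client of the previous level.
\begin{itemize}
\item The key check is that $\mathcal{C}_C$ knows one summand, and that the other summand is a message of $\mathcal{C}_{C-1}$ not yet delivered at the previous level. In Phase~3 of Algorithm~\ref{alg:general-case} or \ref{alg:special-case}, the last client at that level only sends its first few unique messages, so its last unique message $\mathcal{I}_{U_L}^{1,C-1}$ is still undelivered.
\item Earlier clients already sit at exactly $T$. The transmission must therefore not let any of them decode $\mathcal{I}_{U_L}^{1,C-1}$.
\end{itemize}

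\textbf{Main obstacle.} This last check is where I expect the difficulty. Read literally, $\mathcal{I}_{F_1}^{1,C-1}$ lies in $\mathcal{I}_{C-2}\cap\mathcal{I}_{C-1}$. It is not held by $\mathcal{C}_C$, but it is held by already-satisfied earlier clients, who would then over-decode. I would first try to resolve this by showing the intended summand is the overlap message shared with $\mathcal{C}_C$, namely $\mathcal{I}_{L_1}^{1,C-1}$. In that reading:
\begin{itemize}
\item $\mathcal{C}_C$ decodes exactly $\mathcal{I}_{U_L}^{1,C-1}$.
\item Previous-level clients never received any terminal-overlap message of the $r_{\max}$-th client, since Phase~3 transmits only $F_1\oplus U_j$. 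So they know neither summand.
\item No other client holds either summand.
\end{itemize}
This gives $N(1)=1$ with the security constraint preserved.
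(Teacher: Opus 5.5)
Your route is the paper's: verify the two explicit schemes directly, and use the fact that non-consecutive side-information sets are disjoint (from $K\ge 2P$) for the security part. The $C^l=2$ check is fine. It helps to note that $\mathcal{I}_{F_1}^{l,2}=\mathcal{I}_{L_1}^{l,1}$, so the three XORs share one summand and each of the other three messages appears in exactly one XOR. Hence no combination isolates a single unknown for an outside client. For $C^l=1$, reading the summand as $\mathcal{I}_{L_1}^{1,C-1}$ is exactly what the paper's own verification uses. The $\mathcal{I}_{F_1}$ in Algorithm~1 is a typo, and you are right that it would break security. So that is not where the real difficulty lies.

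The step that actually carries the $C^l=1$ case, and where the paper spends its effort, is one you only assert. You need that $\mathcal{I}_{U_L}^{1,C-1}$ is not among the unique messages $\mathcal{C}_{C-1}$ already sent in Phase~3 at level $l-1$. ``Only its first few unique messages'' needs a count, and the count is essential. Suppose $\mathcal{I}_{U_L}^{1,C-1}$ were one of them. Then the level-$(l-1)$ clients that decoded it would now decode $\mathcal{I}_{L_1}^{1,C-1}$ and exceed $T$. Also, $\mathcal{C}_C$ would cascade through the Phase-3 XORs $\mathcal{I}_{F_1}^{l-1,r_{\max}^{l-1}}\oplus\mathcal{I}_{U_j}^{l-1,r_{\max}^{l-1}}$ to further messages. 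So both exactness and security would fail.

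You can close this as follows. Write $r=r_{\max}^{l-1}$. $C^l=1$ means $C^{l-1}=r+1$. Substituting into the defining inequality for $r_{\max}^{l-1}$ gives $r^2-3r-2\ge 0$ and $r^2-5r<0$, so $r=4$ and $C^{l-1}=5$. Since $5\neq\frac{(4-2)(4-1)}{2}+1$, Algorithm~2 runs with $T^{l-1}=1$. Its Phase~3 therefore uses exactly $5-4-1+2=2$ unique messages of $\mathcal{C}_{C-1}$. Meanwhile $\mathcal{C}_{C-1}$ holds $K+C-2-2P\ge C-2\ge 3$ unique messages, because $C\ge C^{l-1}=5$. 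This also shows the paper's Algorithm~3 branch in that argument is vacuous. With this count added, your proof is complete.
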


The proof of Lemma \ref{base case lemma} is provided in Section \ref{sec:base-cases}. Therefore, Algorithm 1 produces a complete transmission scheme \(\mathcal{W}\) such that every client in the system attains the target knowledge level \(T = K + C\).

To calculate the total number of transmissions \(N(C)\), we note that at level 1 there are \(C\) transmissions, satisfying the first \(r_{\max}\) clients, where \(r_{\max}\) is determined by (\ref{rmax general}) as per Algorithm 1 (which is equivalent to (\ref{rmax})). The remaining \(C - r_{\max}\) clients enter level 2, where the problem is treated anew with \(C - r_{\max}\) clients. This yields the recurrence:
\[
N(C) = C + N(C - r_{\max}).
\]
The recursion continues until \(C^l \leq 2\), with base cases requiring \(N(2) = 3\), \(N(1) = 1\), and \(N(0) = 0\) transmissions. Since \(r_{\max}^l \geq 2\) for all \(C^l > 2\), the sequence \(C^{l+1} = C^l - r_{\max}^l\) is strictly decreasing, guaranteeing that the algorithm terminates in finite steps. Upon termination, every client possesses exactly \(T\) messages, with no client exceeding this target, thereby satisfying both the completeness and security requirements of the DPIC problem under the LPS-FO model.

This completes the proof of correctness for Algorithm~1.

\section{Proof of Correctness of Algorithm 2}
\label{sec:proof-alg2}

This section provides a comprehensive proof of correctness for Algorithm~2, which handles the general case in our recursive transmission scheme. The proof is structured according to the three phases of the algorithm, demonstrating that after completion of all phases, each client $\mathcal{C}_i^l \in \mathcal{C}^l$ with $i \leq r_{\max}^l$ attains exactly $T = K + C$ messages, while clients with $i > r_{\max}^l$ remain unaffected, preserving the recursive structure.

\subsection{{{\bf Phase 1}}: Intermediate Clients Transmission Structure}
\label{subsec:phase1-proof-alg2}

For each intermediate client $\mathcal{C}_i^l$, where $i \in [2, r_{\max}^l-2]$, the transmission pattern consists of exactly $i-1$ coded messages. The first transmission takes the form $
W_{1}^{l,i+1} = \mathcal{I}_{F_1}^{l,i+1} \oplus \mathcal{I}_{L_1}^{l,i+1},
$
which combines one message from the initial overlapping segment (the first $P$ messages) with one message from the terminal overlapping segment (the last $P$ messages) of the client's side-information set. The remaining $i-2$ transmissions are constructed as
$
W_{j+1}^{l,i+1} = \mathcal{I}_{F_1}^{l,i+1} \oplus \mathcal{I}_{U_j}^{l,i+1}, \quad j \in [1, i-2],
$
where each XOR involves the same initial overlapping message $\mathcal{I}_{F_1}^{l,i+1}$ paired with a distinct unique (non-overlapping) message $\mathcal{I}_{U_j}^{l,i+1}$ from the client's side-information set.
The validity of these transmissions rests on the assumption $K \geq 2P$, which guarantees that every client $\mathcal{C}_i^l$ possesses at least $i-1$ unique messages. This condition follows from the LPS-FO structure.  Consequently, all required XOR combinations are well-defined and can be generated by the transmitting client.

The transmission pattern enables a dual decoding mechanism where each client decodes exactly one new message from every transmission made by subsequent clients and every transmission made by preceding clients.

Consider client $\mathcal{C}_i^l$ with $1 \leq i \leq r_{\max}^l-3$. For decoding from subsequent clients, the process begins with client $\mathcal{C}_{i+1}^l$, from whose transmissions $\mathcal{C}_i^l$ can decode messages. Since $\mathcal{C}_i^l$ knows $\mathcal{I}_{F_1}^{l,i+1}$ through overlap with $\mathcal{C}_{i+1}^l$ ($\mathcal{I}_{F_1}^{l,i+1}=\mathcal{I}_{L_1}^{l,i}$ by Definition \ref{def LPSFO}), it can decode $\mathcal{I}_{L_1}^{l,i+1}$ from the transmission $W_{1}^{l,i+1} = \mathcal{I}_{F_1}^{l,i+1} \oplus \mathcal{I}_{L_1}^{l,i+1}$. Additionally, from transmissions of the form $W_{j+1}^{l,i+1} = \mathcal{I}_{F_1}^{l,i+1} \oplus \mathcal{I}_{U_j}^{l,i+1}$, it decodes each $\mathcal{I}_{U_j}^{l,i+1}$ for $j \in [1, i-2]$. A forward decoding chain  occurs then: by the LPS-FO structure, $\mathcal{I}_{F_1}^{l,i+2} = \mathcal{I}_{L_1}^{l,i+1}$, which $\mathcal{C}_i^l$ now possesses. This enables $\mathcal{C}_i^l$ to decode from $\mathcal{C}_{i+2}^l$'s transmissions, obtaining $\mathcal{I}_{L_1}^{l,i+2}$ and $\mathcal{I}_{U_j}^{l,i+2}$. This process continues recursively for all clients $k$ with $i < k \leq r_{\max}^l-2$, allowing $\mathcal{C}_i^l$ to accumulate the set $\{\mathcal{I}_{L_1}^{l,k} \mid k \in [i+1, r_{\max}^l-2]\} \cup \{\mathcal{I}_{U_j}^{l,k} \mid k \in [i+1, r_{\max}^l-2], j \in [1, k-2]\}$ from subsequent clients.

Simultaneously, $\mathcal{C}_i^l$, for $i \in [3, r_{\max}-1]$ engages in backward decoding from preceding clients. Client $\mathcal{C}_i^l$ knows $\mathcal{I}_{L_1}^{l,i-1}$ (which overlaps with its own side-information set, i.e., $\mathcal{I}_{F_1}^{l,i}=\mathcal{I}_{L_1}^{l,i-1}$). From the transmission $W_{1}^{l,i-1} = \mathcal{I}_{F_1}^{l,i-1} \oplus \mathcal{I}_{L_1}^{l,i-1}$ by client $\mathcal{C}_{i-1}^l$, it receives $\mathcal{I}_{F_1}^{l,i-1}$, enabling it to further decode $\mathcal{I}_{U_j}^{l,i-1}$ from transmissions $W_{j+1}^{l,i-1} =  \mathcal{I}_{F_1}^{l,i-1} \oplus \mathcal{I}_{U_j}^{l,i-1}$, for $j \in [i-3]$. The backward decoding chain continues  because $\mathcal{I}_{F_1}^{l,i-1} = \mathcal{I}_{L_1}^{l,i-2}$, allowing $\mathcal{C}_i^l$ to decode from $\mathcal{C}_{i-2}^l$, obtaining $\mathcal{I}_{F_1}^{l,i-2}$ and subsequently $\mathcal{I}_{U_j}^{l,i-2}$. This iterative backward decoding extends to all clients $k$ with $2 \leq k < i$, providing $\mathcal{C}_i^l$ with the set $\{\mathcal{I}_{F_1}^{l,k} \mid k \in [2, i-1]\} \cup \{\mathcal{I}_{U_j}^{l,k} \mid k \in [2, i-1], j \in [1, k-2]\}$.

Each intermediate client $\mathcal{C}_i^l$, for $i \in [2, r_{\max}^l-2]$, transmits exactly $i-1$ messages during \textbf{Phase~1}, yielding a total of $T^l = \sum_{i=2}^{r_{\max}^l-2} (i-1) = \frac{(r_{\max}^l-2)(r_{\max}^l-3)}{2}$ transmissions for this phase. Let $T_i^{l,1}$ denote the number of new messages decoded by client $\mathcal{C}_i^l$ (where $i \in [1, r_{\max}^l-1]$) from \textbf{Phase~1} transmissions. Since each transmission contains exactly one unknown message that the appropriate client can recover, every client obtains one new message from every transmission it is able to decode. For a transmitting client $\mathcal{C}_i^l$ with $i \in [2, r_{\max}^l-2]$, the $i-1$ transmissions it makes are not decodable by itself; consequently, it can decode from the remaining $T^l - (i-1)$ transmissions. Hence, $T_i^{l,1} = T^l - (i-1)$ for $i \in [2, r_{\max}^l-2]$. The two non‑transmitting clients in this phase, $\mathcal{C}_1^l$ and $\mathcal{C}_{r_{\max}^l-1}^l$, participate fully in the forward and backward decoding chains, respectively, and therefore each decodes exactly one new message from every \textbf{Phase~1} transmission. Thus, $T_1^{l,1} = T_{r_{\max}^l-1}^{l,1} = T^l$. After \textbf{Phase~1}, each client $\mathcal{C}_i^l$ with $i \in [1, r_{\max}^l-1]$ has acquired $T_i^{l,1}$ additional messages beyond its initial side‑information set.

\subsection{{\bf Phase 2}: Client $\mathcal{C}_{r_{\max}^l-1}^l$ Transmission Strategy}
\label{subsec:phase2-proof-alg2}

{\bf Phase 2} consists exclusively of transmissions from client $\mathcal{C}_{r_{\max}^l-1}^l$, which transmits exactly $r_{\max}^l - 2$ coded messages. These transmissions are organized into two complementary categories that serve distinct purposes for different client groups.

The first category, comprising $C^l - r_{\max}^l - T^l+1$ transmissions, includes
$W_{1}^{l,r_{\max}^l-1} = \mathcal{I}_{F_1}^{l,r_{\max}^l-1} \oplus \mathcal{I}_{L_1}^{l,r_{\max}^l-1}$
and
$W_{j+1}^{l,r_{\max}^l-1} = \mathcal{I}_{F_1}^{l,r_{\max}^l-1} \oplus \mathcal{I}_{U_j}^{l,r_{\max}^l-1}$
for $j \in [1, C^l - r_{\max}^l - T^l]$.
We first verify the validity of these transmissions. The quantity
$C^l - r_{\max}^l - T^l$
is non-negative because from inequality (\ref{rmax}) we have
$C^l > \frac{(r_{\max}^l-2)(r_{\max}^l-1)}{2}$,
while
$T^l = \frac{(r_{\max}^l-2)(r_{\max}^l-3)}{2}$.
A direct comparison gives $C^l - T^l > r_{\max}^l - 2$, which implies
$C^l - r_{\max}^l - T^l > 0$.
From inequality (\ref{rmax}) we have 
$C^l \le \frac{(r_{\max}^l-1)r_{\max}^l}{2}$.
Combining this with
$T^l = \frac{(r_{\max}^l-3)(r_{\max}^l-2)}{2}$
yields
\begin{align}
C^l - T^l
&\le \frac{(r_{\max}^l-1)r_{\max}^l}{2}
   - \frac{(r_{\max}^l-2)(r_{\max}^l-3)}{2} \nonumber \\
&\le 2r_{\max}^l - 3.
\end{align}
Consequently,
\begin{align}
C^l - r_{\max}^l - T^l
\le (2r_{\max}^l - 3) - r_{\max}^l
= r_{\max}^l - 3.
\label{unique}
\end{align}
Thus, the number of required unique messages,
$C^l - r_{\max}^l - T^l$,
is at most $r_{\max}^l - 3$, which is strictly less than $r_{\max}^l - 2$.

Under the assumption $K \ge 2P$, each client $\mathcal{C}_i^l$ is guaranteed to have at least $i-1$ unique (non-overlapping) messages within its side-information set. Therefore, client $\mathcal{C}_{r_{\max}^l-1}^l$ possesses at least $r_{\max}^l - 2$ unique messages. The transmissions $W_{j+1}^{l,r_{\max}^l-1}$ demand exactly $C^l - r_{\max}^l - T^l$ distinct unique messages. Since
$C^l - r_{\max}^l - T^l < r_{\max}^l - 2$ holds (from (\ref{unique})), the required number of unique messages does not exceed what is available. Consequently, all XOR combinations prescribed for this phase are well-defined and can be generated by the transmitting client.

These transmissions help to continue forward decoding chains for clients in the set
$\{\mathcal{C}_i^l, i \in [r_{\max}^l-2]\}$
and introduce backward decoding chain for the client $\mathcal{C}_{r_{\max}^l}^l$.
For any client $\mathcal{C}_i^l$ with $i < r_{\max}^l-1$, knowledge of
$\mathcal{I}_{F_1}^{l,r_{\max}^l-1}$
(obtained from {\bf Phase 1} since $\mathcal{I}_{F_1}^{l,r_{\max}^l-1}=\mathcal{I}_{L_1}^{l,r_{\max}^l-2}$) enables decoding of
$\mathcal{I}_{L_1}^{l,r_{\max}^l-1}$
from $W_{1}^{l,r_{\max}^l-1}$ and
$\mathcal{I}_{U_j}^{l,r_{\max}^l-1}$
from $W_{j+1}^{l,r_{\max}^l-1}$
for $j \in [1, C^l - r_{\max}^l - T^l]$.
Client $\mathcal{C}_{r_{\max}^l}^l$ can decode
$\mathcal{I}_{F_1}^{l,r_{\max}^l-1}$
from $W_{1}^{l,r_{\max}^l-1}$ as
$\mathcal{I}_{L_1}^{l,r_{\max}^l-1}
= \mathcal{I}_{F_1}^{l,r_{\max}^l}$ is available with it.
So, it can further decode
$\mathcal{I}_{U_j}^{l,r_{\max}^l-1}$
from $W_{j+1}^{l,r_{\max}^l-1}$
for $j \in [1, C^l - r_{\max}^l - T^l]$,
because it knows
$\mathcal{I}_{F_1}^{l,r_{\max}^l-1}$.

The second category consists of
$r_{\max}^l - 3 - (C^l - r_{\max}^l - T^l)$
transmissions for
$j \in [C^l - r_{\max}^l - T^l + 1, r_{\max}^l - 3]$
of the form
$W_{j+1}^{l,r_{\max}^l-1}
= \mathcal{I}_{L_1}^{l,r_{\max}^l-1}
\oplus
\mathcal{I}_{L_{q+1}}^{l,r_{\max}^l-1}$,
where
$q = j - (C^l - r_{\max}^l - T^l)$.
By assumption, we have $P \ge r_{\max} - 2$, where $r_{\max}$ is defined as the integer satisfying
\[
\frac{(r_{\max} - 2)(r_{\max} - 1)}{2}
< C \le
\frac{r_{\max}(r_{\max} - 1)}{2}.
\]
Since $C^l < C$, it follows that $r_{\max}^l \le r_{\max}$.
Consequently, $P \ge r_{\max}^l - 2$ guarantees that the terminal overlapping segment of every client contains at least $r_{\max}^l - 2$ messages.
This ensures that all required messages of the form
$\mathcal{I}_{L_j}^{l,k}$
(for $j \le r_{\max}^l - 2$)
exist within the overlapping region, thereby validating the feasibility of the corresponding XOR combinations in the transmission scheme.

These transmissions exclusively extend the forward decoding chain for clients $i < r_{\max}^l-1$. After decoding $\mathcal{I}_{L_1}^{l,r_{\max}^l-1}$ from $W_{1}^{l,r_{\max}^l-1}$, these clients can further decode $\mathcal{I}_{L_{q+1}}^{l,r_{\max}^l-1}$. For client $\mathcal{C}_{r_{\max}^l}^l$, both $\mathcal{I}_{L_1}^{l,r_{\max}^l-1}$ and $\mathcal{I}_{L_{q+1}}^{l,r_{\max}^l-1}$ are already present in its side-information set (since $\mathcal{I}_{F_j}^{l,r_{\max}^l} = \mathcal{I}_{L_j}^{l,r_{\max}^l-1}$ by the LPS-FO structure), so it gains no new message from these transmissions.

Let $T_{i}^{l,2}$ be the number of messages decoded by the client $\mathcal{C}_i^l$ for $i \in [1, r_{\max}]$ after completing {\bf Phase 2} of this level $l$.
After {\bf Phase 2}, intermediate clients $\mathcal{C}_i^l$ with $i < r_{\max}^l-1$ have decoded an additional $r_{\max}^l - 3$ messages, bringing their total decoded message count to $T_{i}^{l,2}=T_i^{l,1} +(r_{\max}^l - 3)$. Client $\mathcal{C}_{r_{\max}^l-1}^l$, being the transmitter, gains no new messages from its own transmissions, maintaining a total of $T_{r_{\max}^l-1}^{l,2}=T_{r_{\max}^l-1}^{l,1}$ decoded messages. 

Client $\mathcal{C}_{r_{\max}^l}^l$ decodes $C^l - r_{\max}^l - T^l+1 $ new messages from {\bf Phase 2} (specifically, from category 1). Now, since it has decoded the message $\mathcal{I}_{F_1}^{l,r_{\max}^l-1}$ combining with the fact that $\mathcal{I}_{F_1}^{l,r_{\max}^l-1}=\mathcal{I}_{L_1}^{l,r_{\max}^l-2}$ it enters the backward chain and decodes all the messages in the set $\{\mathcal{I}_{F_1}^{l,k} \mid k \in [2, r_{\max}^l-2]\} \cup \{\mathcal{I}_{U_j}^{l,k} \mid k \in [2, r_{\max}^l-2], j \in [1, k-2]\}$.  This results in a cumulative total of $T_{r_{\max}^l}^{l,2}=T^l+ (C^l - r_{\max}^l - T^l +1) = C^l - r_{\max}^l  + 1$ decoded messages. This brings $\mathcal{C}_{r_{\max}^l}^l$ to its target message count, as will be verified subsequently (proved later in this section).

\subsection{{\bf Phase 3}: Client $\mathcal{C}_{r_{\max}^l}^l$ Transmission Strategy}
\label{subsec:phase3-proof-alg2}

{\bf Phase 3} completes the delivery process through transmissions exclusively from client $\mathcal{C}_{r_{\max}^l}^l$, which sends exactly $C^l - r_{\max}^l - T^l + 2$ messages, each of the form
$W_j^{l,r_{\max}^l} = \mathcal{I}_{F_1}^{l,r_{\max}^l} \oplus \mathcal{I}_{U_j}^{l,r_{\max}^l}$
for $j \in [1, C^l - r_{\max}^l - T^l + 2]$.
The decodability of these transmissions relies on the receiving clients' knowledge of $\mathcal{I}_{F_1}^{l,r_{\max}^l}$.
As established earlier, $C^l - T^l > r_{\max}^l - 2$, guaranteeing that
$C^l - r_{\max}^l - T^l + 2$
is positive.

Under the assumption $K \ge 2P$, each client $\mathcal{C}_i^l$ possesses at least $i-1$ unique messages.
For $i = r_{\max}^l$, this means at least $r_{\max}^l - 1$ unique messages are available.
The inequality
$C^l - r_{\max}^l - T^l + 2 \le r_{\max}^l - 1$
follows from (\ref{unique}) (since
$C^l - r_{\max}^l - T^l \le r_{\max}^l - 3$ implies
$C^l - r_{\max}^l - T^l + 2 \le r_{\max}^l - 1$).
Therefore, the required number of unique messages for the transmissions does not exceed what is available at
$\mathcal{C}_{r_{\max}^l}^l$, ensuring all XOR combinations are well-defined and valid.

Let $T_{i}^{l,3}$ be the number of messages decoded by the client
$\mathcal{C}_i^l$ for $i \in [1, r_{\max}^l]$
after completing {\bf Phase 3} of this level $l$.
For intermediate clients $\mathcal{C}_i^l$ with
$i \in [2, r_{\max}^l-2]$,
from {\bf Phase 2}, it has already decoded
$\mathcal{I}_{F_1}^{l,r_{\max}^l}$
since
$\mathcal{I}_{F_1}^{l,r_{\max}^l}
= \mathcal{I}_{L_1}^{l,r_{\max}^l-1}$.
These clients obtained
$\mathcal{I}_{L_1}^{l,r_{\max}^l-1}$
from {\bf Phase 2}
(specifically from $W_1^{l,r_{\max}^l-1}$ of category~1),
thus they possess
$\mathcal{I}_{F_1}^{l,r_{\max}^l}$.
Consequently, from each transmission
$W_j^{l,r_{\max}^l}
= \mathcal{I}_{F_1}^{l,r_{\max}^l}
\oplus
\mathcal{I}_{U_j}^{l,r_{\max}^l}$,
they decode
$\mathcal{I}_{U_j}^{l,r_{\max}^l}$,
acquiring
$C^l - r_{\max}^l - T^l + 2$
new messages.
Hence, the total messages acquired so far in this level $l$ is
\[
T_{i}^{l,3}
= T_{i}^{l,2}
+ C^l - r_{\max}^l - T^l + 2
= C^l - (i-1).
\]

Client $\mathcal{C}_{r_{\max}^l-1}^l$ naturally knows
$\mathcal{I}_{F_1}^{l,r_{\max}^l}$
as it equals
$\mathcal{I}_{L_1}^{l,r_{\max}^l-1}$,
which is part of its own side-information set.
Therefore, it can directly decode
$\mathcal{I}_{U_j}^{l,r_{\max}^l}$
from each
$W_j^{l,r_{\max}^l}$,
also gaining
$C^l - r_{\max}^l - T^l + 2$
new messages.
Hence, the total messages acquired so far in this level $l$ is
\[
T_{r_{\max}^l-1}^{l,3}
= T_{r_{\max}^l-1}^{l,2}
+ C^l - r_{\max}^l - T^l + 2
= C^l - (r_{\max}^l - 2).
\]

Client $\mathcal{C}_{r_{\max}^l}^l$,
being the transmitter,
gains no additional messages from {\bf Phase 3} transmissions.
Hence, the total messages acquired so far in this level $l$ is
\[
T_{r_{\max}^l}^{l,3}
= T_{r_{\max}^l}^{l,2}
= C^l - (r_{\max}^l - 1).
\]

In short, by the end of Phase~3, each client
$\mathcal{C}_i^l \in \mathcal{C}^l$
has decoded
$T_{i}^{l,3}
= C^l - (i-1)$
messages from the transmissions made at this level.

Since $T = K + C$ by definition, and we have
$K^{l} = K^{l-1} + r_{\max}^{l-1}$
and
$C^{l} = C^{l-1} - r_{\max}^{l-1}$,
it follows that
$K^{l} + C^{l} = K^{l-1} + C^{l-1}$.
By induction,
$K^l + C^l = K + C = T$,
establishing that
$T = K^{l} + C^{l}$.
This confirms that the number of additional messages required by the first client in the set
$\mathcal{C}^{l}$
to reach the target remains exactly
$C^{l}$.
Hence, the number of additional messages required for any client
$\mathcal{C}^{l}_i \in \mathcal{C}^{l}$,
for $i \in [C^l]$,
to reach the target is exactly
$C^{l}- (i-1)$.
We have proved that each client
$\mathcal{C}_i^l \in \mathcal{C}^l$, for $i \leq r_{\max}^l$
has decoded
$T_{i}^{l,3}
= C^l- (i-1)$
messages from the transmissions made at this level.
Hence each of these clients 
are satisfied from the transmissions made at this level.

 Clients $\mathcal{C}_i^l \in \mathcal{C}^l$ with $i > r_{\max}^l$ receive zero messages during the current recursive level.
A crucial property emerges for clients with index $i > r_{\max}^l$. These clients cannot decode any messages from Phase~1 transmissions. All the transmissions $W$ in this level is of the type $W = \mathcal{I}_{F_1}^{l,k} \oplus \mathcal{I}_{L_1}^{l,k}$, or $W = \mathcal{I}_{L_1}^{l,k} \oplus \mathcal{I}_{L_j}^{l,k}$ (with $k \leq r_{\max}^l-1$), or   $W = \mathcal{I}_{F_1}^{l,k} \oplus \mathcal{I}_{U_j}^{l,k}$, for $k \leq r_{\max}^l$,  for some integer $j$. The  client $\mathcal{C}_i^l$ (where $i > r_{\max}^l$) lacks knowledge of both XORed terms in all these kind of transmissions as they do not have any overlap with any of the messages present in the XOR. No linear combination of received transmissions reduces the number of unknown terms to one as from the transmission structure in each of the phases, any two transmissions have atmost one message in common. Consequently, clients $\mathcal{C}_{i}^l$, for $i>r_{\max}^l$, remain unaffected by the transmissions in this level, preserving their original message sets and enabling the recursive decomposition.

\section{Proof of Correctness of Algorithm 3}
\label{sec:proof-alg3}

This section provides a comprehensive proof of correctness for Algorithm~3, which handles the special case in our recursive transmission scheme. The proof is structured according to the three phases of the algorithm, demonstrating that after completion of all phases, each client $\mathcal{C}_i^l \in \mathcal{C}^l$ with $i \leq r_{\max}^l$ attains exactly $T = K + C$ messages, while clients with $i > r_{\max}^l$ remain unaffected, preserving the recursive structure.

\subsection{{{\bf Phase 1}}: Intermediate Clients Transmission Structure}
\label{subsec:phase1-proof-alg3}

The client $\mathcal{C}_2^l$ makes a single transmission of the form $W_{1}^{l,2} = \mathcal{I}_{F_1}^{l,2} \oplus \mathcal{I}_{F_2}^{l,2} \oplus \mathcal{I}_{L_1}^{l,2}$, which combines the first two messages from the initial overlapping segment (the first $P$ messages) with one message from the terminal overlapping segment (the last $P$ messages) of the client's side-information set. This transmission is done to ensure that only the first client receives a new message.
For each intermediate client $\mathcal{C}_i^l$, where $i \in [3, r_{\max}^l-2]$, the transmission pattern consists of exactly $i-2$ coded messages. The first transmission takes the form $
W_{1}^{l,i} = \mathcal{I}_{F_1}^{l,i} \oplus \mathcal{I}_{L_1}^{l,i},
$
which combines one message from the initial overlapping segment (the first $P$ messages) with one message from the terminal overlapping segment (the last $P$ messages) of the client's side-information set. The remaining $i-3$ transmissions are constructed as
$
W_{j+1}^{l,i} = \mathcal{I}_{F_1}^{l,i} \oplus \mathcal{I}_{U_j}^{l,i}, \quad j \in [1, i-3],
$
where each XOR involves the same initial overlapping message $\mathcal{I}_{F_1}^{l,i}$ paired with a distinct unique (non-overlapping) message $\mathcal{I}_{U_j}^{l,i}$ from the client's side-information set. The validity of these transmissions rests on the assumption $K \geq 2P$, which guarantees that every client $\mathcal{C}_i^l$ possesses at least $i-1$ unique messages. This condition follows from the LPS-FO structure.  Consequently, all required XOR combinations are well-defined and can be generated by the transmitting client.

Consider client $\mathcal{C}_i^l$ with $1 \leq i \leq r_{\max}^l-3$. For decoding from subsequent clients, the process begins with client $\mathcal{C}_{i+1}^l$, from whose transmissions $\mathcal{C}_{i}^l$ can decode messages. Consider client $\mathcal{C}_1^l$ ($i=1$). Since $\mathcal{C}_1^l$ knows $\mathcal{I}_{F_1}^{l,2}$ and $\mathcal{I}_{F_2}^{l,2}$ through overlap with $\mathcal{C}_{2}^l$ ($\mathcal{I}_{F_1}^{l,2}=\mathcal{I}_{L_1}^{l,1}$ and $\mathcal{I}_{F_2}^{l,2}=\mathcal{I}_{L_2}^{l,1}$ by Definition \ref{def LPSFO}), it can decode $\mathcal{I}_{L_1}^{l,2}$ from the transmission $W_1^{l,2} = \mathcal{I}_{F_1}^{l,2} \oplus \mathcal{I}_{F_2}^{l,2} \oplus \mathcal{I}_{L_1}^{l,2}$.  For $i>1$, since $\mathcal{C}_{i}^l$ knows $\mathcal{I}_{F_1}^{l,i+1}$ through overlap with $\mathcal{C}_{i+1}^l$ ($\mathcal{I}_{F_1}^{l,i+1}=\mathcal{I}_{L_1}^{l,i}$ by Definition \ref{def LPSFO}), it can decode $\mathcal{I}_{L_1}^{l,i+1}$ from the transmission $W_{1}^{l,i+1} = \mathcal{I}_{F_1}^{l,i+1} \oplus \mathcal{I}_{L_1}^{l,i+1}$. 
Additionally for $i>2$, from transmissions of the form $W_{j+1}^{l,i+1} = \mathcal{I}_{F_1}^{l,i+1} \oplus \mathcal{I}_{U_j}^{l,i+1}$, each client  $\mathcal{C}_i^l$ with $3 \leq i \leq r_{\max}^l-3$, decodes each $\mathcal{I}_{U_j}^{l,i+1}$ for $j \in [1, i-3]$. A forward decoding chain  occurs then: by the LPS-FO structure, $\mathcal{I}_{F_1}^{l,i+2} = \mathcal{I}_{L_1}^{l,i+1}$, which $\mathcal{C}_i^l$ now possesses. This enables $\mathcal{C}_i^l$ to decode from $\mathcal{C}_{i+2}^l$'s transmissions, obtaining $\mathcal{I}_{L_1}^{l,i+2}$ and $\mathcal{I}_{U_j}^{l,i+2}$. This process continues recursively for all clients $k$ with $i < k \leq r_{\max}^l-2$, allowing $\mathcal{C}_i^l$ to accumulate the set $\{\mathcal{I}_{L_1}^{l,k} \mid k \in [i+1, r_{\max}^l-2]\} \cup \{\mathcal{I}_{U_j}^{l,k} \mid k \in [i+1, r_{\max}^l-2], j \in [1, k-3]\}$ from subsequent clients.

Simultaneously, $\mathcal{C}_i^l$, for $i \in [4, r_{\max}-1]$ engages in backward decoding from preceding clients. Client $\mathcal{C}_i^l$ knows $\mathcal{I}_{L_1}^{l,i-1}$ (which overlaps with its own side-information set, i.e., $\mathcal{I}_{F_1}^{l,i}=\mathcal{I}_{L_1}^{l,i-1}$). From the transmission $W_{1}^{l,i-1} = \mathcal{I}_{F_1}^{l,i-1} \oplus \mathcal{I}_{L_1}^{l,i-1}$ by client $\mathcal{C}_{i-1}^l$, it receives $\mathcal{I}_{F_1}^{l,i-1}$, enabling it to further decode $\mathcal{I}_{U_j}^{l,i-1}$ from transmissions $W_{j+1}^{l,i-1} =  \mathcal{I}_{F_1}^{l,i-1} \oplus \mathcal{I}_{U_j}^{l,i-1}$, for $j \in [i-4]$. The backward decoding chain continues  because $\mathcal{I}_{F_1}^{l,i-1} = \mathcal{I}_{L_1}^{l,i-2}$, allowing $\mathcal{C}_i^l$ to decode from $\mathcal{C}_{i-2}^l$, obtaining $\mathcal{I}_{F_1}^{l,i-2}$ and subsequently $\mathcal{I}_{U_j}^{l,i-2}$. This iterative backward decoding extends to all clients $k$ with $3 \leq k < i$, providing $\mathcal{C}_i^l$ with the set $\{\mathcal{I}_{F_1}^{l,k} \mid k \in [2, i-1]\} \cup \{\mathcal{I}_{U_j}^{l,k} \mid k \in [2, i-1], j \in [1, k-2]\}$. Nobody, except client $\mathcal{C}_1^l$, will decode anything from the transmission made by client $\mathcal{C}_2^l.$

The client $\mathcal{C}_2^l$ makes one transmission.
Each intermediate client $\mathcal{C}_i^l$, for $i \in [3, r_{\max}^l-2]$, transmits exactly $i-2$ coded messages during \textbf{Phase~1}, yielding a total of $T^l = 1+\sum_{i=3}^{r_{\max}^l-2} (i-2) = 1+\frac{(r_{\max}^l-3)(r_{\max}^l-4)}{2}$ transmissions for this phase. Let $T_i^{l,1}$ denote the number of new messages decoded by client $\mathcal{C}_i^l$ (where $i \in [1, r_{\max}^l-1]$) from \textbf{Phase~1} transmissions. Since each transmission contains exactly one unknown message that the appropriate client can recover, every client obtains one new message from every transmission it is able to decode. For a transmitting client $\mathcal{C}_i^l$ with $i \in [3, r_{\max}^l-2]$, the $i-2$ transmissions it makes are not decodable by itself, plus it cannot decode from the coded message transmitted by client $\mathcal{C}_2^l$, consequently, it can decode from the remaining $T^l - (i-2)-1$ transmissions. Hence, $T_i^{l,1} = T^l - (i-1)$ for $i \in [3, r_{\max}^l-2]$. The client $\mathcal{C}_2^l$ cannot decode from its own transmission which is one coded message. Hence, $T_2^{l,1} = T^l - 1$. The  non‑transmitting client in this phase, $\mathcal{C}_1^l$ , participates fully in the forward and backward decoding chains, respectively, and therefore it decodes exactly one new message from every \textbf{Phase~1} transmission. Thus, $T_1^{l,1} = T^l$. The  non‑transmitting client  $\mathcal{C}_{r_{\max}^l-1}^l$ also participates fully in the forward and backward decoding chains, except the transmission made by client $\mathcal{C}_2^l$, and therefore it decodes exactly $ T_{r_{\max}^l-1}^{l,1} = T^l-1$ messages. 
After \textbf{Phase~1}, each client $\mathcal{C}_i^l$ with $i \in [1, r_{\max}^l-1]$ has acquired $T_i^{l,1}$ additional messages beyond its initial side‑information set.

\subsection{{\bf Phase 2}: Client $\mathcal{C}_{r_{\max}^l-1}^l$ Transmission Strategy}
\label{subsec:phase2-proof-alg3}

{\bf Phase 2} consists exclusively of transmissions from client $\mathcal{C}_{r_{\max}^l-1}^l$, which transmits exactly $r_{\max}^l - 3$ messages. The first transmission is constructed as $W_{1}^{l,r_{\max}^l-1} = \mathcal{I}_{F_1}^{l,r_{\max}^l-1} \oplus \mathcal{I}_{L_1}^{l,r_{\max}^l-1}$.
The remaining $r_{\max}^l - 4$
transmissions are constructed as $W_{j+1}^{l,r_{\max}^l-1}
= \mathcal{I}_{L_1}^{l,r_{\max}^l-1}
\oplus
\mathcal{I}_{U_{j}}^{l,r_{\max}^l-1}$,
where
$j \in [1, r_{\max}^l - 4]$.

These transmissions help to continue forward decoding chains for clients in the set
$\{\mathcal{C}_i^l, i \in [r_{\max}^l-2]\}$
and introduce backward decoding chain for the client $\mathcal{C}_{r_{\max}^l}^l$.
For any client $\mathcal{C}_i^l$ with $i < r_{\max}^l-1$, knowledge of
$\mathcal{I}_{F_1}^{l,r_{\max}^l-1}$
(obtained from {\bf Phase 1}) enables decoding of
$\mathcal{I}_{L_1}^{l,r_{\max}^l-1}$
from $W_{1}^{l,r_{\max}^l-1}$ and
$\mathcal{I}_{U_{j}}^{l,r_{\max}^l-1}$
from $W_{j+1}^{l,r_{\max}^l-1}$
for $j \in [1, r_{\max}^l - 4]$.


Client $\mathcal{C}_{r_{\max}^l}^l$ can decode
$\mathcal{I}_{F_1}^{l,r_{\max}^l-1}$
from $W_{1}^{l,r_{\max}^l-1}$ as
$\mathcal{I}_{L_1}^{l,r_{\max}^l-1}
= \mathcal{I}_{F_1}^{l,r_{\max}^l}$ is available with it.
So, it can further decode
$\mathcal{I}_{U_j}^{l,r_{\max}^l-1}$
from $W_{j+1}^{l,r_{\max}^l-1}$
for $j \in [1, r_{\max}^l - 4]$,
because it knows
$\mathcal{I}_{F_1}^{l,r_{\max}^l-1}$.

Let $T_{i}^{l,2}$ be the number of messages decoded by the client $\mathcal{C}_i^l$ for $i \in [1, r_{\max}]$ after completing {\bf Phase 2} of this level $l$.
After {\bf Phase 2}, intermediate clients $\mathcal{C}_i^l$ with $i < r_{\max}^l-1$ have decoded an additional $r_{\max}^l - 3$ messages, bringing their total decoded message count to $T_{i}^{l,2}=T_i^{l,1} +(r_{\max}^l - 3)$. Client $\mathcal{C}_{r_{\max}^l-1}^l$, being the transmitter, gains no new messages from its own transmissions, maintaining a total of $T_{r_{\max}^l-1}^{l,2}=T_{r_{\max}^l-1}^{l,1}$ decoded messages. 

Client $\mathcal{C}_{r_{\max}^l}^l$ decodes $r_{\max}^l - 3$ new messages from {\bf Phase 2}. Now, since it has decoded the message $\mathcal{I}_{F_1}^{l,r_{\max}^l-1}$ combining with the fact that $\mathcal{I}_{F_1}^{l,r_{\max}^l-1}=\mathcal{I}_{L_1}^{l,r_{\max}^l-2}$ it enters the backward chain and decodes all the messages in the set $\{\mathcal{I}_{F_1}^{l,k} \mid k \in [3, r_{\max}^l-2]\} \cup \{\mathcal{I}_{U_j}^{l,k} \mid k \in [3, r_{\max}^l-2], j \in [1, k-3]\}$. It cannot decode anything from the transmission made by client $\mathcal{C}_2^l$. The number of elements in the set given above would be equal to one less than the total number of transmissions in {\bf Phase 1}.  This results in a cumulative total of $T_{r_{\max}^l}^{l,2}=T^l+ r_{\max}^l - 4$ decoded messages. 

We know that, $
T^l = \frac{(r_{\max}^l-3)(r_{\max}^l-4)}{2} + 1$,
and that the total number new messages required by $\mathcal{C}_{r_{\max}^l}^l$ is $C^l - r_{\max}^l  + 1$.  Now,

\begin{align}
T_{r_{\max}^l}^{l,2}
&= \frac{(r_{\max}^l-3)(r_{\max}^l-4)}{2}
   + r_{\max}^l - 3 \nonumber \\
&= \frac{({r_{\max}^l})^2-5r_{\max}^l+6}{2}.
\label{a3p2lhs1}
\end{align}

The condition for {\bf{Algorithm 3}} is that $
C^l = \frac{(r_{\max}^l-2)(r_{\max}^l-1)}{2} + 1$.
On calculating $C^l - r_{\max}^l  + 1$, we get,

\begin{align}
C^l - r_{\max}^l  + 1
&= \frac{(r_{\max}^l-2)(r_{\max}^l-1)}{2}
   - r_{\max}^l + 2 \nonumber \\
&= \frac{({r_{\max}^l})^2-5r_{\max}^l+6}{2}.
\label{a3p2rhs1}
\end{align}

Hence, from (\ref{a3p2lhs1}) and (\ref{a3p2rhs1}) we can prove that $T_{r_{\max}^l}^{l,2} = C^l - r_{\max}^l  + 1$.
This brings $\mathcal{C}_{r_{\max}^l}^l$ to its target message count.

\subsection{{\bf Phase 3}: Client $\mathcal{C}_{r_{\max}^l}^l$ Transmission Strategy}
\label{subsec:phase3-proof-alg3}

{\bf Phase 3} completes the delivery process through transmissions exclusively from client $\mathcal{C}_{r_{\max}^l}^l$, which sends exactly $C^l - r_{\max}^l - T^l + 3$ messages, each of the form
$W_j^{l,r_{\max}^l} = \mathcal{I}_{F_1}^{l,r_{\max}^l} \oplus \mathcal{I}_{U_j}^{l,r_{\max}^l}$
for $j \in [1, C^l - r_{\max}^l - T^l + 3]$.
The decodability of these transmissions relies on the receiving clients' knowledge of $\mathcal{I}_{F_1}^{l,r_{\max}^l}$. 
The total number of transmissions done in this phase can be calculated as,

\begin{align}
C^l - r_{\max}^l  - T^l + 3
=& \left (\frac{(r_{\max}^l-2)(r_{\max}^l-1)}{2} + 1 \right ) - r_{\max}^l  \nonumber 
\\& -\left (\frac{(r_{\max}^l-3)(r_{\max}^l-4)}{2} + 1 \right) +3\nonumber \\
=& \frac{{r_{\max}^l}^2-3r_{\max}^l+2}{2}  - r_{\max}^l\nonumber \\
& - \frac{{r_{\max}^l}^2-7r_{\max}^l+12}{2} + 3 \nonumber \\
=& \frac{4r_{\max}^l-10}{2}
   - r_{\max}^l + 3 \nonumber \\
=& 2r_{\max}^l - 5
   - r_{\max}^l + 3 \nonumber \\
=& r_{\max}^l - 2 . 
\label{a3p3}
\end{align}

Under the assumption $K \ge 2P$, each client $\mathcal{C}_i^l$ possesses at least $i-1$ unique messages.
For $i = r_{\max}^l$, this means at least $r_{\max}^l - 1$ unique messages are available for $\mathcal{C}_{r_{\max}}^l$. 
Since the number of unique messages required for transmission in this phase is $r_{\max}^l - 2$, which is clearly less than the number of unique messages available for $\mathcal{C}_{r_{\max}}^l$, all XOR combinations are well-defined and valid.

Let $T_{i}^{l,3}$ be the number of messages decoded by the client
$\mathcal{C}_i^l$ for $i \in [1, r_{\max}^l]$
after completing {\bf Phase 3} of this level $l$.
For intermediate clients $\mathcal{C}_i^l$ with
$i \in [1, r_{\max}^l-2]$,
from {\bf Phase 2}, it has already decoded
$\mathcal{I}_{F_1}^{l,r_{\max}^l}$
since
$\mathcal{I}_{F_1}^{l,r_{\max}^l}
= \mathcal{I}_{L_1}^{l,r_{\max}^l-1}$.
These clients obtained
$\mathcal{I}_{L_1}^{l,r_{\max}^l-1}$
from {\bf Phase 2},
thus they possess
$\mathcal{I}_{F_1}^{l,r_{\max}^l}$.
Consequently, from each transmission
$W_j^{l,r_{\max}^l}
= \mathcal{I}_{F_1}^{l,r_{\max}^l}
\oplus
\mathcal{I}_{U_j}^{l,r_{\max}^l}$,
they decode
$\mathcal{I}_{U_j}^{l,r_{\max}^l}$,
acquiring
$C^l - r_{\max}^l - T^l + 3$
new messages.
Hence, the total messages acquired so far in this level $l$ is
\[
T_{i}^{l,3}
= T_{i}^{l,2}
+ C^l - r_{\max}^l - T^l + 3
= C^l - (i-1).
\]

Client $\mathcal{C}_{r_{\max}^l-1}^l$ naturally knows
$\mathcal{I}_{F_1}^{l,r_{\max}^l}$
as it equals
$\mathcal{I}_{L_1}^{l,r_{\max}^l-1}$,
which is part of its own side-information set.
Therefore, it can directly decode
$\mathcal{I}_{U_j}^{l,r_{\max}^l}$
from each
$W_j^{l,r_{\max}^l}$,
also gaining
$C^l - r_{\max}^l - T^l + 3$
new messages.
Hence, the total messages acquired so far in this level $l$ is
\[
T_{r_{\max}^l-1}^{l,3}
= T_{r_{\max}^l-1}^{l,2}
+ C^l - r_{\max}^l - T^l + 3
= C^l - (r_{\max}^l - 2).
\]

Client $\mathcal{C}_{r_{\max}^l}^l$,
being the transmitter,
gains no additional messages from {\bf Phase 3} transmissions.
Hence, the total messages acquired so far in this level $l$ is
\[
T_{r_{\max}^l}^{l,3}
= T_{r_{\max}^l}^{l,2}
= C^l - (r_{\max}^l - 1).
\]

In short, by the end of Phase~3, each client
$\mathcal{C}_i^l \in \mathcal{C}^l$
has decoded
$T_{i}^{l,3}
= C^l - (i-1)$
messages from the transmissions made at this level.

Since $T = K + C$ by definition, and we have
$K^{l} = K^{l-1} + r_{\max}^{l-1}$
and
$C^{l} = C^{l-1} - r_{\max}^{l-1}$,
it follows that
$K^{l} + C^{l} = K^{l-1} + C^{l-1}$.
By induction,
$K^l + C^l = K + C = T$,
establishing that
$T = K^{l} + C^{l}$.
This confirms that the number of additional messages required by the first client in the set
$\mathcal{C}^{l}$
to reach the target remains exactly
$C^{l}$.
Hence, the number of additional messages required for any client
$\mathcal{C}^{l}_i \in \mathcal{C}^{l}$,
for $i \in [C^l]$,
to reach the target is exactly
$C^{l}- (i-1)$.
We have proved that each client
$\mathcal{C}_i^l \in \mathcal{C}^l$, for $i \leq r_{\max}^l$
has decoded
$T_{i}^{l,3}
= C^l- (i-1)$
messages from the transmissions made at this level.
Hence each of these clients 
are satisfied from the transmissions made at this level.

 Clients $\mathcal{C}_i^l \in \mathcal{C}^l$ with $i > r_{\max}^l$ receive zero messages during the current recursive level.
A crucial property emerges for clients with index $i > r_{\max}^l$. These clients cannot decode any messages from Phase~1 transmissions. All the transmissions $W$ in this level is of the type $W = \mathcal{I}_{F_1}^{l,k} \oplus \mathcal{I}_{F_2}^{l,k} \oplus \mathcal{I}_{L_1}^{l,k},W = \mathcal{I}_{F_1}^{l,k} \oplus \mathcal{I}_{L_1}^{l,k}$, or   $W = \mathcal{I}_{F_1}^{l,k} \oplus \mathcal{I}_{U_j}^{l,k}$, for $k \leq r_{\max}^l$,  for some integer $j$. The  client $\mathcal{C}_i^l$ (where $i > r_{\max}^l$) lacks knowledge of both XORed terms in all these kind of transmissions as they do not have any overlap with any of the messages present in the XOR. No linear combination of received transmissions reduces the number of unknown terms to one as from the transmission structure in each of the phases, any two transmissions have atmost one message in common. Consequently, clients $\mathcal{C}_{i}^l$, for $i>r_{\max}^l$, remain unaffected by the transmissions in this level, preserving their original message sets and enabling the recursive decomposition.



\section{Recursive Structure Preservation}
\label{sec: recursive-structure}

The recursive algorithm maintains the LPS-FO structure at each level, enabling the consistent application of Algorithms~2 and~3 throughout the decomposition. This section establishes the inductive preservation of the LPS-FO properties and verifies the parameter relationships that ensure correctness across recursive levels. The recursive process systematically partitions the original client set into successive layers. At each level $l$, the algorithm focuses exclusively on the first $r_{\max}^l$ clients in the current set $\mathcal{C}^l$, employing Algorithms~2 or~3 to elevate them to the target knowledge level $T$. The remaining clients are intentionally shielded from receiving any messages during this level, ensuring their knowledge states remain unaltered (from the proofs in Section \ref{sec:proof-alg2} and \ref{sec:proof-alg3}). After processing these $r_{\max}^l$ clients, they are removed from consideration, and the next $r_{\max}^{l+1}$ clients become the focus at level $l+1$. This sequential approach guarantees that at any given level, only the clients currently being processed are affected, while all others retain their original side-information.

This decomposition strategy inherently preserves the LPS-FO structure because each subset $\mathcal{C}^l$ constitutes a contiguous segment of the original client sequence. The linear progression and fixed overlap properties are local characteristics that remain valid within any contiguous subsequence. Consequently, the algorithm can be applied recursively without modification, relying solely on the same transmission schemes (Algorithms~2 and~3) at every level.

\subsection{Initial Level ($l = 1$)}

At the initial recursive level $l = 1$, the original set of clients $\mathcal{C}^1 = \{\mathcal{C}_1, \mathcal{C}_2, \ldots, \mathcal{C}_C\}$ possesses the LPS-FO structure by problem definition. The side-information sets satisfy $|\mathcal{I}_i| = K + (i-1)$ for $i \in [1, C]$, with consecutive clients sharing exactly $P$ overlapping messages. After executing either Algorithm~2 or~3 at level $1$, the first $r_{\max}^1$ clients in $\mathcal{C}^1$ attain exactly $T = K + C$ messages (proved in Section \ref{sec:proof-alg2} and \ref{sec:proof-alg3}) and cease transmission. Crucially, no other clients receive any new messages during this level, as proven in Section \ref{sec:proof-alg2} and \ref{sec:proof-alg3}. The remaining clients $\{\mathcal{C}_i \mid i \in [r_{\max}^1+1, C]\}$ retain their original side-information sets unchanged.

\subsection{Inductive Step: From Level $l$ to Level $l+1$}

Assume that at recursive level $l$, we have a set of clients $\mathcal{C}^l = \{\mathcal{C}_1^l, \mathcal{C}_2^l, \ldots, \mathcal{C}_{C^l}^l\}$ that maintains the LPS-FO structure with parameters $K^l$ and $P$, where $K^l$ denotes the number of messages in the side-information set of the first client $\mathcal{C}_1^l$. The cardinality of side-information set of client $\mathcal{C}_i^l$ is $K^l + (i-1)$ for $i \in [1, C^l]$, with consecutive clients sharing exactly $P$ overlapping messages. 

After applying Algorithm~2 or~3 at level $l$, the first $r_{\max}^l$ clients in $\mathcal{C}^l$ achieve exactly $T = K + C$ messages and stop transmitting. The remaining $C^l - r_{\max}^l$ clients receive no new messages during this level, preserving their original knowledge states. We then construct the client set for level $l+1$ as $\mathcal{C}^{l+1} = \{\mathcal{C}_1^{l+1}, \mathcal{C}_2^{l+1}, \ldots, \mathcal{C}_{C^{l+1}}^{l+1}\}$, where $\mathcal{C}_i^{l+1} = \mathcal{C}_{r_{\max}^l + i}^l$ for $i \in [1, C^{l+1}]$ and $C^{l+1} = C^l - r_{\max}^l$.

We now verify that $\mathcal{C}^{l+1}$ inherits the LPS-FO structure. The first client $\mathcal{C}_1^{l+1} = \mathcal{C}_{r_{\max}^l+1}^l$ originally possessed $K^l + r_{\max}^l$ messages. Denoting $K^{l+1} = K^l + r_{\max}^l$, we observe that for any client $\mathcal{C}_i^{l+1} = \mathcal{C}_{r_{\max}^l+i}^l$, the cardinality of side-information set size is $ K^l + (r_{\max}^l + i - 1) = (K^l + r_{\max}^l) + (i-1) = K^{l+1} + (i-1)$. Thus, the linear progression property holds. Moreover, the fixed overlap of $P$ messages between consecutive clients is preserved because the relative ordering and overlapping structure of the clients remain unchanged. Therefore, $\mathcal{C}^{l+1}$ satisfies the LPS-FO conditions with parameters $K^{l+1}$ (the cardinality of the side-information set of the first client in $\mathcal{C}^{l+1}$) and $P$ (the overlap parameter).

\section{Base Cases Verification}
\label{sec:base-cases}

The recursive algorithm terminates when the number of remaining clients at level $l$, denoted $C^l$, is reduced to 2 or 1. These base cases are handled by explicit transmission schemes that ensure the final clients reach the target knowledge level $T = K + C$. Crucially, these transmissions are designed so that no other clients in the system can decode additional messages, maintaining the security constraint that each client obtains exactly $T$ messages.

\subsection{Two-Client Base Case ($C^l = 2$)}

When only two clients remain at recursive level $l$, they correspond to the final two clients in the original ordering, specifically $\mathcal{C}_1^l = \mathcal{C}_{C-1}$ and $\mathcal{C}_2^l = \mathcal{C}_C$. The transmission scheme consists of three coded broadcasts. First, client $\mathcal{C}_2^l$ transmits $W_1^l = \mathcal{I}_{F_1}^{l,2} \oplus \mathcal{I}_{U_1}^{l,2}$ and $W_2^l = \mathcal{I}_{F_1}^{l,2} \oplus \mathcal{I}_{U_2}^{l,2}$. Client $\mathcal{C}_1^l$, possessing $\mathcal{I}_{F_1}^{l,2}$ through the fixed overlap with $\mathcal{C}_2^l$, decodes both $\mathcal{I}_{U_1}^{l,2}$ and $\mathcal{I}_{U_2}^{l,2}$, acquiring two new messages. Subsequently, client $\mathcal{C}_1^l$ transmits $W_3^l = \mathcal{I}_{L_1}^{l,1} \oplus \mathcal{I}_{U_1}^{l,1}$. Client $\mathcal{C}_2^l$, knowing $\mathcal{I}_{L_1}^{l,1}$ through the overlap with $\mathcal{C}_1^l$, decodes $\mathcal{I}_{U_1}^{l,1}$, gaining one new message. After these exchanges, both clients have received precisely the number of additional messages needed to reach $T = K + C$ messages in total (which is proved in Section ). 

For all other clients in the system (those with indices $i < C-1$), these transmissions yield no new messages. Consider any client $\mathcal{C}_i$ with $i < C-1$. The transmissions $W_1^l$ and $W_2^l$ both involve $\mathcal{I}_{F_1}^{l,2}$ , $\mathcal{I}_{U_j}^{l,2}$ (for $j \in \{1,2\}$), which are unknown to $\mathcal{C}_i$ since $\mathcal{I}_{F_1}^{l,2}$ belongs exclusively to the overlapping region between $\mathcal{C}_{C-1}$ and $\mathcal{C}_C$ and $\mathcal{I}_{U_j}^{l,2}$  belongs exclusively to  $\mathcal{C}_{C}$. Similarly, $W_3^l$ involves $\mathcal{I}_{L_1}^{l,1}$, $\mathcal{I}_{U_1}^{l,1}$, which are unknown to $\mathcal{C}_i$ since $\mathcal{I}_{F_1}^{l,1}$ belongs exclusively to the overlapping region between $\mathcal{C}_{C-1}$ and $\mathcal{C}_C$ and $\mathcal{I}_{U_1}^{l,1}$  belongs exclusively to  $\mathcal{C}_{C-1}$. which is also unknown to $\mathcal{C}_i$ as it lies in the overlap between $\mathcal{C}_{C-1}$ and $\mathcal{C}_{C}$. Also, none of these messages were involved in transmissions done at any previous recursive levels.
Since each transmission is an XOR of two unknown messages, no client other than the intended recipient can decode any new information. Also, no linear combination of received transmissions can reduce the number of unknown terms to one as there are 3 transmissions and each transmission has 1 new message in the XOR sum. This preserves the security condition, preventing unintended knowledge acquisition.

\subsection{Single-Client Base Case ($C^l = 1$)}

When only one client remains at recursive level $l$, this client corresponds to the final client $\mathcal{C}_C$ in the original ordering. In this scenario, client $\mathcal{C}_{C-1}^1$ (from the original indexing) transmits a single coded message $W_1^l = \mathcal{I}_{L_1}^{1,C-1}  \oplus \mathcal{I}_{U_L}^{1,C-1}$. Client $\mathcal{C}_1^l$ (which is $\mathcal{C}_C$ in original indexing) knows $\mathcal{I}_{L_1}^{1,C-1}$ through the fixed overlap with $\mathcal{C}_{C-1}$, enabling it to decode $\mathcal{I}_{U_L}^{1,C-1}$, thereby acquiring the final message required to reach $T = K + C$ messages (which is proved in Sections \ref{sec:proof-alg2} and \ref{sec:proof-alg3}). 

For all other clients $\mathcal{C}_i$ with $i < C-1$, this transmission does not contain any decodable information. In particular, the message $\mathcal{I}_{L_1}^{1,C-1}$ is unknown to these clients since it lies in the intersection of $\mathcal{C}_{C-2}$ and $\mathcal{C}_{C-1}$, while $\mathcal{I}_{U_L}^{1,C-1}$ is unique to $\mathcal{C}_{C-1}$.

We now show that $\mathcal{I}_{U_L}^{1,C-1}$ does not appear in any transmission made by client $\mathcal{C}_{C-1}$ at the previous level $l-1$. At level $l-1$, client $\mathcal{C}_{C-1}$ corresponds to $\mathcal{C}_{r_{\max}^{l-1}}^{\,l-1}$. Hence, the only transmissions that need to be examined are the Phase~3 transmissions at level $l-1$.

If \textbf{Algorithm~3} is executed at level $l-1$, the number of unique messages required in Phase~3 is $r_{\max}^{l-1} - 2$. The client $\mathcal{C}_{r_{\max}^{l-1}}^{\,l-1}$ has at least $r_{\max}^{l-1} - 1$ messages, it follows that  the last unique message is not used in Phase~3, and $\mathcal{I}_{U_L}^{1,C-1}$ remains uninvolved.

If \textbf{Algorithm~2} is executed at level $l-1$, Phase~3 requires $r_{\max}^{l-1} - 1$ unique messages. From the LPS-FO side-information structure, when $l>3$, client $\mathcal{C}_{C^1}^{l-1}$ has at least $r_{\max}^{l-1} - 1$ messages. Since $r_{\max}^{l-1} \le r_{\max}^{l-2}$ and $r_{\max}^1 > 2$, the side-information size of the first client in level $l-1$, namely $\mathcal{C}_1^{l-1}$, exceeds $K$. Consequently, $K^{l-1} > K$, which implies that client $\mathcal{C}_{C^{l-1}}^{l-1}$ has at least $r_{\max}^{l-1}$ messages, out of which only $r_{\max}^{l-1} - 1$ messages are used in that level $l-1$. Hence, the last unique message is not used in Phase~3, and $\mathcal{I}_{U_L}^{1,C-1}$ remains uninvolved.
Thus, the only remaining case to consider is $l = 2$. By exhaustive inspection, the only configuration in which the base case propagates to level $2$ occurs when $C = 5$. In this case, client $\mathcal{C}_{C-2}$ (the second-to-last client) uses only two messages during Phase~3 of \textbf{Algorithm~2} in level $1$, leaving at least one unique message unused. Hence, even in this scenario, the final unique message does not appear in any prior transmissions.

Since both XORed components are unknown to unintended clients, no client can extract any new message from this transmission. Therefore, the security constraint remains satisfied.

The base cases thus complete the recursive algorithm, ensuring that when the process terminates, every client possesses exactly $T = K + C$ distinct messages, with no client exceeding this target. The explicit transmission schemes for $C^l = 2$ and $C^l = 1$, combined with their inherent security properties, provide the necessary termination conditions for the recursive decomposition implemented in Algorithm~1.

\section{Discussion}

In this work, we studied secure DPIC under heterogeneous side-information conditions by adopting the LPS-FO model, which captures the practical non-uniformity in client knowledge. We proposed a transmission scheduling algorithm that efficiently coordinates client broadcasts to maximize coding gain per transmission while ensuring that all clients reach a common target knowledge level. Importantly, the design enforces a strict security constraint, guaranteeing that no client acquires more than a prescribed target number of messages. Our proposed scheme is optimal only when $C=3$ or $4$. As the next step, we aim to further reduce the gap between the achievable number of transmissions and the corresponding lower bound, which is currently $ N(C - r_{\max})$, moving toward transmission-optimal secure decentralized schemes. 

\bibliographystyle{IEEEtran}
\bibliography{ref}

\end{document}